%% file: main.tex
\documentclass[12pt,a4paper]{article}
\include{preamble}

\title{\bfseries 
Can Reinforcement Learning Efficiently Discover Price Manipulation?}

\author[]{Ioanna-Yvonni Tsaknaki}
\author[]{Andrea Macrì}
\author[]{Fabrizio~Lillo}

\affil[]{\footnotesize Scuola Normale Superiore, Piazza dei Cavalieri 7, Pisa 56126, Italy.}
\date{}

\begin{document}
\maketitle
\begin{abstract}

In this paper, we investigate whether a model-free RL agent can identify and exploit price manipulation opportunities more effectively than a traditional model-based approach that assumes correct specification of the data-generating process but relies on noisy parameter estimates. We consider a single-asset market in which prices evolve according to an Almgren–Chriss framework with non-linear permanent impact and linear temporary impact. We first establish the existence of price-manipulative strategies in discrete time and compute the optimal benchmark strategy using Sequential Least Squares Quadratic Programming under full information. We then compare two finite-sample learning approaches: a model-based procedure that estimates impact parameters from simulated execution data and an agnostic RL approach based on Deep Deterministic Policy Gradient, trained directly on the same amount of data. For intermediate volatility, the RL agent successfully discovers profitable manipulative strategies without explicit knowledge of the underlying model, even when training data are quite limited. More importantly, RL consistently outperforms the model-based approach when parameter estimates are affected by sampling error, despite the latter benefiting from the correct model specification. For large volatility, all methods are unable to identify manipulation opportunities, while for small volatility, the model based approach outperforms RL. These findings highlight both the effectiveness of RL in complex control problems and the risks associated with deploying learning algorithms in financial markets without appropriate safeguards.

\end{abstract}
\section{Introduction}
\label{sec:intro}
 The increasing use of autonomous algorithmic trading systems has transformed modern financial markets. Reinforcement Learning (RL) methods, in particular, have attracted growing attention because of their ability to learn adaptive trading policies directly from market interaction. While these techniques offer significant advantages in high-dimensional and partially observed environments, they also raise important concerns for regulators. Since RL agents optimize rewards without explicit economic understanding or regulatory awareness, they may discover trading behaviours that are profitable but potentially manipulative. As emphasized by \cite{wellman2017ethical} and \cite{Azzutti_Law} autonomous trading agents may create new forms of market abuse and challenge traditional regulatory frameworks designed around human intent and accountability. Recent research in market manipulation and market microstructure has highlighted the vulnerability of electronic markets to algorithmic strategies capable of distorting price formation and exploiting structural weaknesses of limit order book environments \cite{lee2013microstructure}. In modern electronic markets, even relatively simple algorithmic trading rules can artificially influence short-term prices and induce predictable reactions from other market participants. Such behaviours may undermine market integrity and generate misleading signals regarding supply and demand. Moreover, recent studies suggest that reinforcement learning agents may autonomously discover manipulative trading strategies purely through reward maximization and repeated interaction with the environment \cite{martinez2016learning,mizuta2020ai}. Specifically in \cite{martinez2016learning} they predefine honest and manipulative actions, such as spoofing and pinging and use RL to determine when those already-defined actions are optimal under different transaction costs, penalties, and liquidity conditions. The actual price-manipulation mechanism is assumed in the model rather than learned by the agent. On the other hand, in \cite{mizuta2020ai} a genetic algorithm discovers a pump-and-dump pattern through feedback with trend-following agents in an artificial market. This differentiates us since we explicitly model non-linear permanent market impact and show that a model-free DDPG agent can recover such strategies in a continuous-action environment.

 One potential source of manipulation arises from the presence of non-linear market impact, which is the reaction of prices to trades. A large empirical literature such as \cite{Lillo_2003_nature,Potters_2003,10.1007/978-3-642-17045-4_8} and \cite{Zhou01082012} documents that instantaneous market impact is typically a non-linear function of volume for individual trades, typically with an odd and concave shape. While these effects are well established empirically, they may also create environments in which trading activity can influence prices in exploitable ways. In particular, non-linear impact can reintroduce the possibility of dynamic arbitrage\footnote{Dynamic arbitrage exists if a trader can buy and sell over time, end with no inventory, and still make positive expected profit purely from the market-impact dynamics.}, allowing round-trip trading strategies, those with zero initial and terminal inventory, to generate positive expected cash flow purely through their impact on prices. This phenomenon, commonly referred to as price manipulation, was formalized by \cite{Huberman_2004} and further explored in continuous-time settings by \cite{Gatheral_2010} and \cite{Gueant_2016}. Understanding when and how such manipulation arises remains a central question in market microstructure.

Although  one can often find mathematically price manipulations for a given impact model, their identification from market data poses several challenges. First, the true Data Generating Process (DGP) of the price and trade dynamics is unknown, and models represent at most a stylized description of reality. Second, even when the DGP is known, its parameters must be estimated from a finite sample of data. The error in their estimation can severely undermine the manipulation opportunities. Thus, the mere existence of manipulations does not imply a systematically profitable strategy.

RL offers an intriguing alternative since it is learning based and model-free.
Rather than postulating a model and estimating the parameters on a finite sample, it uses this sample to directly identify the policy that maximizes the reward. It is therefore natural to ask whether RL based manipulation identification outperforms in finite sample the  model based approach, even under the correct DGP specification. This is the main question we answer in this paper.

The use of RL methods for optimal execution and trading, where actions correspond to trading rates, is widespread, both in academia and in industry. The Deterministic Policy Gradient framework introduced by \cite{pmlr-v32-silver14} and its deep learning extension, Deep Deterministic Policy Gradient (DDPG) by \cite{pmlr-v14-lillicrap}, enable efficient learning in continuous action spaces. These methods have been successfully applied to optimal execution and portfolio management tasks (see, e.g., \cite{Jiang2017,Moody2001, micheli2024deep}). 
Early applications of RL to trading include \cite{Nevmyvaka2006}, who applied Q-learning to execution problems, and more recent work such as \cite{Hendricks2014} and \cite{Ning_2021}, which demonstrate the effectiveness of deep RL methods in complex market environments. Extensions incorporating realistic features such as stochastic liquidity and partial observability have been studied in \cite{Macri_2025} and \cite{Macri_2026}, respectively.

The use of RL in financial markets presents several criticalities, since algorithmic trading systems must operate within strict regulatory frameworks designed to prevent manipulation and ensure market integrity. In fact, as pointed out in a recent work by  \cite{Cartea_Chang_GarciaArenas}, the risk of unintentional market manipulation exists, especially when learning algorithms converge to strategies that artificially influence prices in order to increase profits. This phenomenon is closely related to the broader problem of reward misspecification in machine learning, where the objective function does not fully capture the desired behaviour of the agent. In the context of optimal execution, an RL agent trained to maximize trading profits may inadvertently learn to exploit non-linear impact effects, effectively engaging in dynamic arbitrage even when such behaviour is undesirable or unrealistic, simply following a reward maximisation logic while learning from the environment.

Motivated by these considerations, this paper examines whether an RL agent is more effective than a traditional model-based strategy, which assumes correct model specification but faces parameter uncertainty arising from finite-sample estimation, in generating positive expected cash flows through price manipulation in a market characterized by non-linear price impact.
 We consider a single-asset setting, where the price evolves according to an Almgren-Chriss (AC) type model \citep{AC} with non-linear permanent impact and linear temporary impact (i.e. quadratic temporary impact costs). The agent is constrained to follow a round-trip strategy, meaning that it starts and ends with zero inventory over a finite trading horizon. This constraint ensures that any profit must arise purely from trading dynamics rather than directional exposure. Leveraging on this framework, and extending \cite{Gueant_2016}, by considering discrete time and non-vanishing temporary impact costs,
we first prove the existence of manipulative strategies from a theoretical perspective, showing that non-linear impact functions can indeed admit solutions with positive expected cash flow even when only two trading velocities are allowed. Due to the non-convexity of the optimization problem, we numerically find the general optimal strategy by using Sequential Least Squares Quadratic Programming (SLSQP), assuming full knowledge of the model parameters. The SLSQP algorithm has been used previously for the solution of optimal execution problem for the non linear transient impact model in \cite{Curato_2017}. This provides a benchmark solution against which learning-based methods can be compared. 

We then consider the problem of learning to manipulate the price from a finite sample of data and we compare two approaches. The first is the model based one: we use a finite sample of simulated TWAP execution to estimate the temporary and permanent impact parameters following the approach proposed by \cite{Almgren_Thum_Hauptmann_Li}. Notice that in this approach the correct DGP is known and only the parameters are estimated. We then use them to implement the optimal manipulation strategies obtained with the SLSQP optimization on a test set of simulations.  The second approach is model free and based on RL: we train a DDPG algorithm to learn directly to manipulate the price on a finite sample of the same size of the one of TWAP executions. Then, as before, we test the learned strategy on a test set. For both approaches, the critical variable is volatility 
since, depending on its level, it prevents
an accurate estimation of impact parameters or an effective learning of the strategy, respectively.

Our first finding is that  the RL agent is capable of discovering trading strategies that yield positive expected cash flow, effectively identifying price-manipulative behaviors without explicit knowledge of the model, also when the training set is relatively small and volatility is not too large.
More importantly, the numerical results show that the performance of the RL-based strategy is superior to the one of the model based approach when impact parameters are noisily estimated from data. The superiority of RL is observed when volatility is not too low, because otherwise the parameter estimation of the model based approach is extremely accurate and therefore the method is equivalent to an exact optimization of a well specified DGP.  It is important to stress that the central question of the paper is therefore not whether manipulation exists theoretically, but whether an RL algorithm can independently learn to exploit these market vulnerabilities, when they are present, and if it can outperform a traditional model based approach.

These results highlight both the effectiveness of RL in solving complex control problems and the potential risks associated with its unsupervised deployment in financial markets. In particular, they underscore the need for a careful design of reward functions and regulatory safeguards to prevent potentially \textit{unintended} market manipulation adopted by learning algorithms. From a broader perspective, this work extends the literature on non-linear market impact and dynamic arbitrage by demonstrating how modern learning algorithms can exploit these features in practice. At the same time, it connects to the growing body of research on the safety and robustness of machine learning systems, emphasizing the importance of aligning algorithmic objectives with market stability and regulatory requirements.

The rest of the paper is organized as follows. In Section \ref{sec: Frame the problem}, we formulate the problem and demonstrate the existence of a manipulative solution involving only two trading velocities. In Section \ref{subsec: SLSQP}, we introduce the SLSQP numerical optimizer and present its manipulative solution under full information of the trading environment. We also illustrate the complexity of the problem and assess its performance under partial information. In Section \ref{sec:DDPG}, we introduce the DDPG algorithm. Finally, in Section \ref{sec:comparison}, we compare all the methods.
\section{Framework of the Problem}\label{sec: Frame the problem}
In this section, we provide a general framework for the problem of dynamic arbitrage. We define the market model with non-linear permanent impact in discrete time, the trading cost structure, and finally we provide a solution to the case where the trader can trade with just two velocities. 

We consider the problem of a risk-neutral agent seeking an optimal trading strategy over a finite time horizon, subject to zero initial and terminal inventory. The agent is allowed to both buy and sell during the trading period, and therefore aims to construct a \textit{round-trip} strategy. The trading horizon is divided into $T$ subintervals, each assumed, for simplicity, to have length $\Delta t=1$.

We assume that the price dynamics follow an AC model with non-linear permanent impact, namely:
\begin{align}\label{eq:fund price}
    S_t  & = S_{t-1}+f(v_t)\Delta t+\sigma\sqrt{\Delta t} \xi_t,~~~\xi_t\sim\mathcal{N}(0,1)\\\label{eq: exec price}
    \Tilde{S}_{t} & = S_{t-1}+\kappa v_{t},~~~~~~\kappa>0
\end{align}
where $S_t$ is the market price and $\Tilde{S}_t$ is the execution price at time $t$.

Assuming that the trader trades at each $t \in [T]$,\footnote{We use the notation $[T]:=\{1,\cdots,T\}$.} we denote by $q_t$ and $v_t$ the inventory and the trading velocity at time $t$ respectively.  The trading velocity $v_t$ represents the traded volume at time $t$. Clearly, it holds that $q_0=q_T=0$. The permanent market impact is non-linear and it is
\begin{equation}\label{eq: market impact}
    f(v_t) = \theta~\text{sign}(v_t)|v_t|^{\delta},~~~~~~\theta>0,\delta\in(0,1)
\end{equation}
when $\delta=1$, the permanent impact is linear as in the standard AC framework. The coefficient $\kappa$ is the temporary impact coefficient, hence we assume a linear temporary impact. We observe that a buy order pushes the price up, both the market and the execution price, since $f(v_t)>0$ and $\kappa v_t>0$. Similarly, a sell order pushes both prices down.

Now, as in other optimal execution problems, we have that:
\begin{equation*}
    \sum_{t=1}^Tv_t = q_0 = 0.
\end{equation*}
In such a problem, the final cash flow, will be:
\begin{equation*}
    X_T = -\sum_{t=1}^Tv_t\Tilde{S}_t
\end{equation*} 
 The expected cash flow is then (see Appendix \ref{app: Frame the problem}):
\begin{equation}\label{eq: final cash-flow_main text}
    \E[X_T] = -\Big[\sum_{t=1}^{T-1} f(v_t)\sum_{j=t+1}^Tv_j+\kappa\sum_{t=1}^T v_t^2\Big].
\end{equation}

As the main aim of the paper is to study the existence, feasibility and learnability of dynamic arbitrages, we briefly recall its main definition used throughout the text.
\begin{definition}[\cite{Huberman_2004}]
    In $T\in\N$ trading steps there is a \textit{dynamic arbitrage} if there exists $\{v_t\}_{t\in[T]}$ such that
    \begin{itemize}
        \item $\sum_{t=1}^Tv_t=0$ and
        \item $\E[X_T]>0$.
    \end{itemize}
\end{definition}

Therefore, finding an optimal solution to the trading problem defined above amounts to finding a solution 
$\vec v_*\in\R^T$ such that the function $\E[X_T]:=c(\vec{v}_*)$ is maximized, where by $c(\vec v_*)$ we denote a function of the optimal trading trajectory $\vec v_*$. To this end, we first need to make sure that there exists a solution to the problem we pose. The following proposition guarantees that the problem is not ill-posed.
\begin{proposition}\label{prop1}
    The expected cash flow for $\delta< 1$, is a continuous function of the strategy. Moreover searching for a strategy that maximizes the expected cash flow is a well-posed problem since
    \begin{equation*}      \lim_{\lambda\rightarrow\pm\infty}c(\lambda \vec v) = -\infty.
    \end{equation*}
    Therefore, there exists a solution that maximizes $\E[X_T]$ with $\delta< 1$. 
\end{proposition}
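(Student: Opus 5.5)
Write $c(\vec v)=-\big[\sum_{t=1}^{T-1} f(v_t)\sum_{j=t+1}^T v_j+\kappa\|\vec v\|^2\big]$ on the feasible set $V_0:=\{\vec v\in\R^T:\sum_t v_t=0\}$, as given by \eqref{eq: final cash-flow_main text}. The plan has three steps: continuity, a growth estimate along every ray, and a Weierstrass argument on a compact sublevel region.

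\textbf{Continuity.} The map $x\mapsto \theta\,\text{sign}(x)|x|^\delta$ is continuous on $\R$ for $\delta\in(0,1)$, including at $0$, where both one-sided limits vanish. Hence $c$ is a finite sum of products of continuous functions plus a quadratic, so it is continuous on $\R^T$ and in particular on $V_0$.

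\textbf{Growth.} Fix $\vec v\neq 0$ and scale by $\lambda\in\R$. By oddness and homogeneity, $f(\lambda v_t)=\text{sign}(\lambda)|\lambda|^\delta f(v_t)$, so
\begin{equation*}
c(\lambda\vec v)=-\text{sign}(\lambda)|\lambda|^{1+\delta}A(\vec v)-\kappa\lambda^2\|\vec v\|^2,\qquad A(\vec v):=\sum_{t=1}^{T-1} f(v_t)\sum_{j>t}v_j .
\end{equation*}
Since $1+\delta<2$ and $\kappa\|\vec v\|^2>0$, the quadratic term dominates, and $c(\lambda\vec v)\to-\infty$ as $\lambda\to\pm\infty$. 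This is the displayed limit.

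\textbf{Existence: the main obstacle.} The pointwise limit along rays does not by itself give coercivity, so I will make it uniform. For $\|\vec v\|=r$, use $|f(v_t)|\le\theta r^\delta$ and $|\sum_{j>t}v_j|\le\sqrt T\, r$ to get
\begin{equation*}
|A(\vec v)|\le (T-1)\sqrt T\,\theta\, r^{1+\delta}=:C r^{1+\delta}.
\end{equation*}
Therefore
\begin{equation*}
c(\vec v)\le C\|\vec v\|^{1+\delta}-\kappa\|\vec v\|^2\to-\infty \quad\text{as }\|\vec v\|\to\infty,
\end{equation*}
uniformly in direction. Now note $c(\vec 0)=0$ and $\vec 0\in V_0$, and choose $R$ such that $c<0$ whenever $\|\vec v\|>R$. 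The set $K=V_0\cap\{\|\vec v\|\le R\}$ is compact and nonempty. By Weierstrass, the continuous function $c$ attains a maximum on $K$, and this maximum is at least $c(\vec 0)=0$. Every point of $V_0$ outside $K$ has $c<0$, so the maximizer over $K$ is a global maximizer over $V_0$.

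Equivalently, one can restrict to the unit sphere in $V_0$, which is compact, so that $\sup|A|$ is finite, and then apply the scaling identity above; this gives the same uniform bound.
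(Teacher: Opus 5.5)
Your proof is correct and rests on the same scaling idea as the paper: the permanent-impact term grows like $|\lambda|^{1+\delta}$ and the temporary cost like $\lambda^2$, so the latter dominates.

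There is one algebraic slip in your growth step. The factor $f(\lambda v_t)=\text{sign}(\lambda)|\lambda|^{\delta}f(v_t)$ is multiplied by $\lambda\sum_{j>t}v_j$. Since $\lambda=\text{sign}(\lambda)|\lambda|$, the two signs cancel, so the correct identity is $c(\lambda\vec v)=-|\lambda|^{1+\delta}A(\vec v)-\kappa\lambda^2\|\vec v\|^2$. This is the paper's $F(\lambda v)=|\lambda|^{\delta+1}F(v)$. In particular $c(-\vec v)=c(\vec v)$, which is the sign invariance the paper later uses when comparing SLSQP strategies. The slip is harmless: the limit holds for either sign of the $|\lambda|^{1+\delta}$ term, and your existence step never uses the identity.

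Where you go beyond the paper is the existence step. The paper passes directly from continuity and the ray-wise limit to the existence of a maximizer. In general that needs the divergence to be uniform over directions. Here uniformity does follow from homogeneity, but the paper never states it. Your bound $|A(\vec v)|\le (T-1)\sqrt{T}\,\theta\,\|\vec v\|^{1+\delta}$ supplies exactly this; so would boundedness of $A$ on the compact unit sphere combined with the corrected scaling identity. Working on the closed subspace $V_0$, with the anchor $c(\vec 0)=0$, then makes the Weierstrass argument complete. It also shows the optimal value is nonnegative, consistent with the paper's remark that the best strategy may be not to trade.
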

\begin{proof}
    See Appendix \ref{app: Frame the problem}.
\end{proof}
The above proposition does not guarantee that there is a round-trip strategy but that the problem we want to solve is not ill-posed, since the best strategy might still be to not trade.


\section{Dynamic arbitrage strategies}\label{subsec: SLSQP}

In this section, we investigate the existence and characteristics of dynamic arbitrage strategies of the previously introduced impact model. Due to the non-convexity of the problem, the investigation is mainly numerical. However in the next subsection we show that, by restricting to trading strategies with only two velocities, it is possible to find and characterize price manipulation strategies.

\subsection{Price-manipulation  with only two trading velocities}
\label{subsec:twoVelo}
Before considering more complex cases, we start by analysing the case where the trader can only select two trading velocities. Leveraging on \cite{Gueant_2016}, we denote by $\mathcal{A} \subset [-b,b]\subset\R$ the set of all possible trading actions across time step and by $|\mathcal{A}|$ the cardinality of $\mathcal{A}$. Notice that for a trading strategy $\{v_t\}_{t\in[T]}$ there are at most $T$ different trading velocities thus $|\mathcal{A}|\leq T$. We observe that for a round-trip strategy it must have $T\geq 2$ (i.e. at least two time steps). In the rest of this subsection we consider that $|\mathcal{A}|=2$.
In this case it is
\begin{equation*}
    v_t = \begin{cases}
    \alpha,~~\text{if}~t\in S\subset[T]\\
        -\beta,~~\text{if}~t\in [T]\setminus S
    \end{cases}
\end{equation*}
where $S$ is any set of indexes in $[T]$. To be a round-trip strategy, $\alpha$ and $\beta$ must have the same sign and it should be that:
\begin{equation}
\label{eq:constraints_pump_dump}
    \sum_{t=1}^T v_t = |S|\alpha-(T-|S|)\beta=0\Leftrightarrow |S|\alpha = (T-|S|)\beta.
\end{equation} 
In the following lemma, we compute the expected cash flow when one trading velocity is used up to a certain time and then replaced by a different trading velocity for the remainder of the trading horizon.   
\begin{lemma}\label{lem: expected cash}
    Let $|\mathcal{A}|=2$ and $n\in[T-1]$
         such that $v_t=\alpha$ for every $1\leq t\leq n$ and $v_t=-\beta$ for every $n< t\leq T $,  then
    \begin{equation}\label{eq:pump_dump_cash}
       \E[X_T] = \frac{n\alpha}{2}[(n+1)f(\alpha)+f(-\frac{n}{T-n}\alpha)(T-n-1)]-\kappa[n\alpha^2+\frac{n^2\alpha^2}{T-n}]
    \end{equation}  
\end{lemma}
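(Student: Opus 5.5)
The plan is to substitute the two-velocity strategy directly into the expression \eqref{eq: final cash-flow_main text} for the expected cash flow. This formula is already established in Appendix~\ref{app: Frame the problem}, so no new probabilistic argument is needed; the proof is a bookkeeping computation. The first step uses the round-trip constraint \eqref{eq:constraints_pump_dump} with $|S|=n$ and $S=\{1,\dots,n\}$. This eliminates $\beta$:
\begin{equation*}
\beta=\frac{n}{T-n}\,\alpha .
\end{equation*}
Hence $f(-\beta)=f\bigl(-\tfrac{n}{T-n}\alpha\bigr)$, which is exactly the argument appearing in \eqref{eq:pump_dump_cash}.

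Next, I will compute the tail sums $R_t:=\sum_{j=t+1}^T v_j$ for $t\in[T-1]$, splitting into the two regimes.
\begin{itemize}
\item For $1\le t\le n$, we have $R_t=(n-t)\alpha-(T-n)\beta=(n-t)\alpha-n\alpha=-t\alpha$.
\item For $n<t\le T-1$, only selling steps remain, so $R_t=-(T-t)\beta$.
\end{itemize}
The key simplification is that the round-trip condition makes $R_t$ depend only on $t$ within the buying phase. Equivalently, $R_t=-q_t$, since the inventory after $t$ buys is $t\alpha$.

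Then I will evaluate the impact term $-\sum_{t=1}^{T-1} f(v_t)R_t$ on each block.
\begin{itemize}
\item The buying block contributes $f(\alpha)\,\alpha\sum_{t=1}^n t=\tfrac{n(n+1)}{2}\alpha f(\alpha)$.
\item The selling block contributes $f(-\beta)\,\beta\sum_{k=1}^{T-n-1}k=f(-\beta)\,\beta\,\tfrac{(T-n-1)(T-n)}{2}$. Substituting $\beta(T-n)=n\alpha$ turns this into $\tfrac{n\alpha}{2}(T-n-1)f(-\tfrac{n}{T-n}\alpha)$.
\end{itemize}
Factoring out $\tfrac{n\alpha}{2}$ gives the bracketed term of \eqref{eq:pump_dump_cash}.

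Finally, the temporary-impact term is
\begin{equation*}
-\kappa\sum_t v_t^2=-\kappa\bigl[n\alpha^2+(T-n)\beta^2\bigr]=-\kappa\Bigl[n\alpha^2+\tfrac{n^2\alpha^2}{T-n}\Bigr].
\end{equation*}

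There is no genuine obstacle. The only points needing care are the following.
\begin{itemize}
\item Keeping track of signs: the outer minus sign in \eqref{eq: final cash-flow_main text} must be combined with the negative tail sums, which yields positive impact contributions.
\item The index range stops at $T-1$, so the selling block contains $T-n-1$ terms rather than $T-n$.
\item The edge case $n=T-1$ needs checking: the selling sum is empty, consistent with the factor $(T-n-1)=0$.
\end{itemize}
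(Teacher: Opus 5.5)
Your proposal is correct and follows essentially the paper's route: substitute the two-velocity strategy into the expected cash-flow formula, split the tail sums into a buying and a selling block, and use $(T-n)\beta=n\alpha$. The only cosmetic difference is that you apply the round-trip constraint before summing, so that $R_t=-t\alpha=-q_t$ on the buying block, whereas the paper keeps the cross-block term $-n(T-n)\beta f(\alpha)$ separate and merges it with the within-block sum $\tfrac{n(n-1)}{2}\alpha f(\alpha)$ only at the end.
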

\begin{proof}
    See Appendix \ref{app: proofs}.
\end{proof}
\noindent Now that we have the expected cash flow for the case when there are only two trading velocities we can compute which are the optimal ones that maximize the cash flow. The following lemma proves this result.
\begin{lemma}\label{lem:second lemma}
    Assume the conditions of Lemma \ref{lem: expected cash} hold and $\alpha>0$, then the maximum expected cash flow at the end of the trading period is
    \begin{equation}\label{eq: PnD}
        \max_{n\in[T-1]}\E[X_T](\alpha^*,n)~~~~~\text{with}~~~~\alpha^* = \bigg(\frac{(\delta+1)A(n)}{2B(n)}\bigg)^{\frac{1}{1-\delta}}
    \end{equation}
    and
    \begin{equation*}
        A(n) = \frac{n\theta}{2}\Big[(n+1)-(T-n-1)\Big(\frac{n}{T-n}\Big)^{\delta}\Big]~~~~~,~~~~~~
        B(n) = \kappa\Big[n+\frac{n^2}{T-n}\Big].
    \end{equation*}
\end{lemma}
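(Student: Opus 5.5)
Starting from the expression for $\E[X_T]$ in Lemma \ref{lem: expected cash}, I would fix $n\in[T-1]$ and treat the cash flow as a function of $\alpha>0$ alone. For $\alpha>0$ we have $f(\alpha)=\theta\alpha^\delta$ and, since $f$ is odd, $f(-\tfrac{n}{T-n}\alpha)=-\theta\big(\tfrac{n}{T-n}\big)^\delta\alpha^\delta$. Substituting into \eqref{eq:pump_dump_cash} and collecting powers of $\alpha$ gives the two-term form
\begin{equation*}
    \E[X_T](\alpha,n)=A(n)\,\alpha^{1+\delta}-B(n)\,\alpha^{2},
\end{equation*}
with $A(n)$ and $B(n)$ exactly as in the statement. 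Checking this rearrangement is routine. It is the step that produces the specific constants, so I would do it carefully, in particular the bracket $(n+1)-(T-n-1)(n/(T-n))^\delta$ multiplied by $n\theta/2$.

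Next I would maximize $g(\alpha)=A\alpha^{1+\delta}-B\alpha^2$ over $\alpha>0$ for fixed $n$. Since $n\geq1$ and $T-n\geq1$, we have $B(n)>0$. The first-order condition $(1+\delta)A\alpha^\delta=2B\alpha$ gives $\alpha^{1-\delta}=\frac{(1+\delta)A}{2B}$, that is, $\alpha^*$ as stated; using $\delta<1$ this root is unique on $(0,\infty)$. To see that it is a maximum, write $g'(\alpha)=\alpha^\delta\big[(1+\delta)A-2B\alpha^{1-\delta}\big]$. The bracket is strictly decreasing in $\alpha$ because $1-\delta>0$. So $g'$ is positive before $\alpha^*$ and negative after it, and $g$ has a global maximum on $(0,\infty)$ at $\alpha^*$. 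This also agrees with Proposition \ref{prop1}, since $g\to-\infty$ as $\alpha\to\infty$.

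The main obstacle is the sign of $A(n)$: $\alpha^*$ is real and positive only when $A(n)>0$. I would handle it by cases. If $A(n)\le 0$, then $g(\alpha)\le0$ for all $\alpha>0$, so the supremum over $\alpha>0$ is $0$, approached as $\alpha\to0$, and such an $n$ offers no arbitrage. If $A(n)>0$, substituting back gives $g(\alpha^*)=\alpha^{*2}B\big(\tfrac{2}{1+\delta}-1\big)=\alpha^{*2}B\,\tfrac{1-\delta}{1+\delta}>0$. I would state the lemma's formula as applying to those $n$ with $A(n)>0$. Such $n$ exist for suitable $T$; for instance, $n\ge T/2$ gives $(n/(T-n))^\delta\le n/(T-n)$, which makes the bracket positive.

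Finally, the switching time $n$ ranges over the finite set $[T-1]$, so the overall maximum is obtained by taking $\max_{n\in[T-1]}\E[X_T](\alpha^*(n),n)$, which is \eqref{eq: PnD}. The value $\beta$ is then determined by the round-trip constraint \eqref{eq:constraints_pump_dump}.
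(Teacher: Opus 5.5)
Your proof is correct and follows the same route as the paper. Like the paper, you rewrite the cash flow as $A(n)\alpha^{1+\delta}-B(n)\alpha^{2}$, solve the first-order condition for $\alpha^*$, and then maximize over the finitely many $n\in[T-1]$. Your sign analysis of $g'(\alpha)=\alpha^{\delta}\bigl[(1+\delta)A-2B\alpha^{1-\delta}\bigr]$ is a sounder justification than the paper's claim that the cash flow is concave in $\alpha$, which fails near $\alpha=0$ where $A\alpha^{1+\delta}$ is convex; your handling of $A(n)\le 0$ and of the existence of some $n$ with $A(n)>0$ also makes explicit what the paper leaves implicit.
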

\begin{proof}
    See Appendix \ref{app: Frame the problem}
\end{proof}
\noindent

\begin{wrapfigure}[16]{r}{0.48\textwidth}
    \centering
    \includegraphics[width=\linewidth]{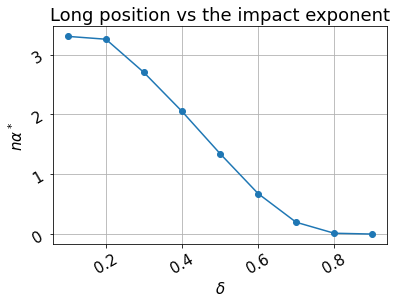}
    \caption{The optimal long position, \(n\alpha^*\), as a function of
    \(\delta\) for the PnD strategy.}
    \label{fig:long-vs-delta-pnd}
\end{wrapfigure}

Notice that we do not have a closed-form analytical solution for the optimal value of $n$, but only for the optimal value of $\alpha$, which itself depends on $n$. Therefore, when $|\mathcal{A}|=2$, we first determine the optimal trading velocity $\alpha^*$ for each fixed $n$, and then choose the value of $n$ that maximizes the expected cash flow. Finally, for a fixed $n$, there exists a unique strategy that maximizes the expected cash flow when $\alpha>0$. Throughout the remainder of the paper, we refer to this strategy as the pump-and-dump strategy (PnD).

It is important to stress that nonlinearity is the main source of manipulation.
Fig.~(\ref{fig:long-vs-delta-pnd}) shows that the long position in the optimal
PnD strategy decreases as $delta$ increases. This relationship illustrates
how stronger nonlinear price impact increases the size of the position selected
by the optimal manipulation strategy.




\subsection{Numerical optimization via SLSQP}\label{subsec: Strategy by SLSQP}

Assuming only two distinct trading velocities, as in subsection \ref{subsec:twoVelo}, clearly restricts the space of possible strategies and the revenues generated from trading. A richer set of trading volumes and strategies could emerge if $|\mathcal{A}|$ were allowed to include more than just two velocities. However, in that case the problem is not convex any more and there are no closed-form solutions. We therefore resort to numerical methods to obtain an approximate solution to the problem. We solve the constrained optimization problem using the Sequential Least Squares Programming (SLSQP) algorithm proposed by \cite{kraft1988software}. Therefore, the resulting optimization problem reads as follows:
\begin{align*}
    \vec{v}_* = \argmax_{\vec{v}\in\R^T}\E[X_T],~~~~~s.t~~\sum_{t=1}^Tv_t = 0,
\end{align*}
The objective function $\mathbb{E}[X_T]$ is defined in Eq.(\ref{eq: final cash-flow_main text}), and the parameter values used to determine it are reported on Table~\ref{tab:params RL and NS}.
We initialize the optimization from $N=500$ different starting points, sampling uniformly at random from $U[-b,b]^T$, in order to identify the best solution.
\begin{table}[ht]
\centering
\caption{Fixed Parameters used both for the RL environment and for the definition of the objective for the SLSQP.}
\label{tab:params RL and NS}
\begin{tabular}{|c|c|c|}
\hline
\textbf{Description} & \textbf{Symbol} & \textbf{Value} \\
\hline
Time Horizon & $T$ & 21 \\
Permanent Impact Exponent & $\delta$ & 0.1 \\
Permanent Impact Coefficient & $\theta$ & $0.0001$ \\
Temporary Impact Coefficient & $\kappa$ & $0.0003$ \\
Maximum Buy (Sell) & $b$ & $15$ ($-15$)\\
\hline
\end{tabular}
\end{table}
Clearly, in this first step, to use the numerical solution we assume complete knowledge of the parameters governing the evolution of the trading environment. 

\begin{figure}[t]
    \centering
    \subfigure[]{\includegraphics[scale=0.4]{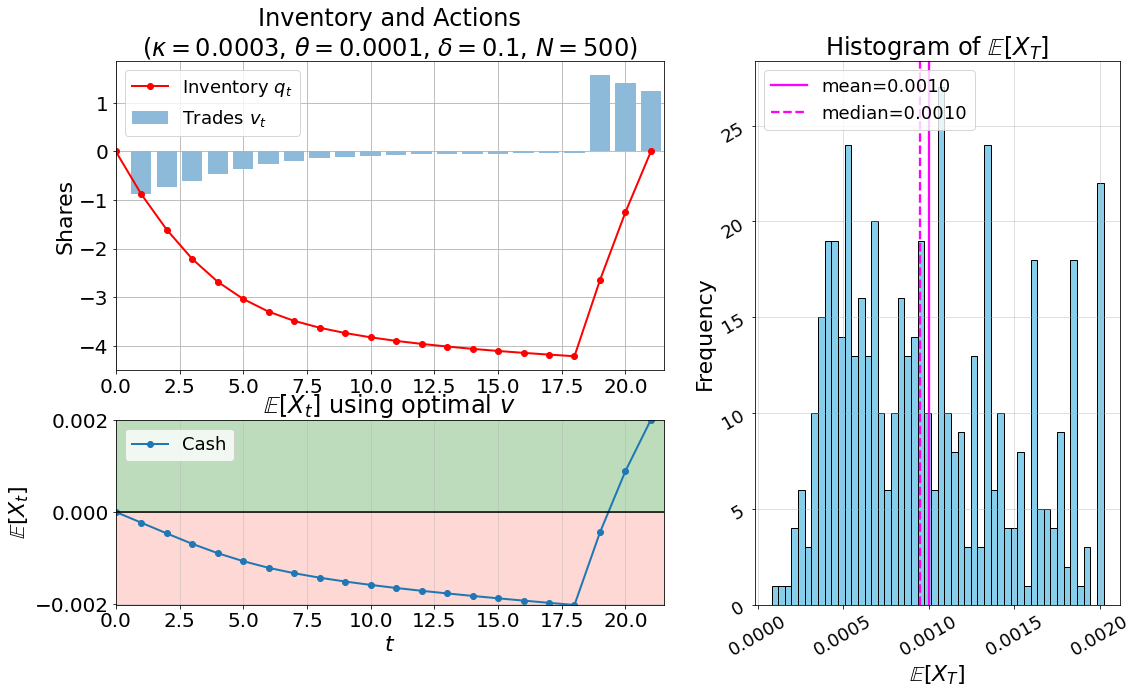}}
    \caption{{\bf Top-Left:} The optimal trading strategy by SLSQP and the corresponding inventory (red line). {\bf Bottom-Left:} The expected cash-flow. The green (red) area indicates the region for which $\E[X_t]>0$ ($\E[X_t]<0$). {\bf Right:} The histogram of $\E[X_T]$ from the different starting points. {\bf Note:} The parameters used are the ones reported on Table \ref{tab:params RL and NS}.}
    \label{fig:numSol}
\end{figure}

Fig.~(\ref{fig:numSol}) illustrates the best numerical solution obtained via SLSQP. The top-left panel shows the optimal trading velocity together with the corresponding inventory trajectory, depicted by the red line. The bottom-left panel reports the associated expected cash flow, from which we observe the existence of a manipulative strategy since $\mathbb{E}[X_T] > 0$. The right panel displays the histogram of $\mathbb{E}[X_T]$ obtained from 500 different SLSQP initializations. These results highlight the complexity of the optimization problem, which stems from the non-linear form of the permanent market impact function. In Appendix \ref{subsec: Similarities among Trading Strategies} we shed light on the landscape of the various trading strategies found by the SLSQP starting from the $N$ different initial points.

\subsection{Effect of impact parameter estimation on profitability}\label{subsec: Misspecification of Parameters in SLSQP}
The previous numerical results were based on the assumption that the parameters governing the market were fully known when optimizing with SLSQP. In practice, this corresponds to an idealized setting that cannot realistically be solved under full information. In Section \ref{subsec: Sensitivity of PnL to parameter misspecification}, we study the sensitivity of SLSQP to parameter misspecification, while in Section \ref{subsec: Estimate parameters SLSQP}, we present a method of impact parameters estimation on a finite sample of optimal execution strategies, based on \cite{Almgren_Thum_Hauptmann_Li}.

\subsubsection{Sensitivity of PnL to parameter misspecification}\label{subsec: Sensitivity of PnL to parameter misspecification}

 Usually, most of the times in real market scenarios, impact parameters values are not available, as by definition they are latent quantities and therefore difficult to be estimated directly.  Therefore, numerical solutions obtained with estimated parameters might be affected by estimation errors as magnitudes might be misspecified. To see this, 
Fig.~(\ref{fig:cash_vs_delta_overall}) illustrates the expected cash as a function of parameter misspecification across all three parameters. For the computation of the cash flow, we find the optimal strategy by using $\hat{\delta}$ and then we substitute this strategy to Eq.(\ref{eq: final cash-flow_main text}) where we use the parameter $\delta$, similarly for the other two cases when we do not have knowledge of parameters $\theta$ and $\kappa$. Specifically, panel (a) shows the impact of misspecification in $\delta$, panel (b) in $\theta$, and panel (c) in $\kappa$.


Two key observations emerge. First, clearly the highest expected cash is achieved in the absence of misspecification, and performance deteriorates as the magnitude of the misspecification increases. Second, performance improves when the true value of $\delta$ is closer to 1 rather than to 0. A similar pattern is observed in the remaining panels, where the highest cash is again attained under no or minimal misspecification.
\begin{figure}[h!]
    \centering
    \subfigure[]{
    \includegraphics[width=0.32\linewidth]{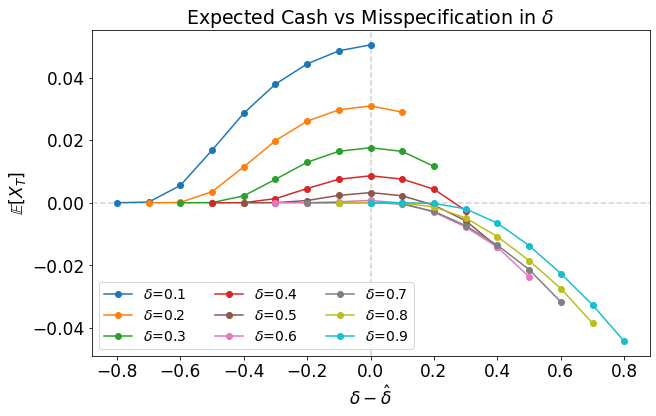}}
    \subfigure[]{\includegraphics[width=0.32\linewidth]{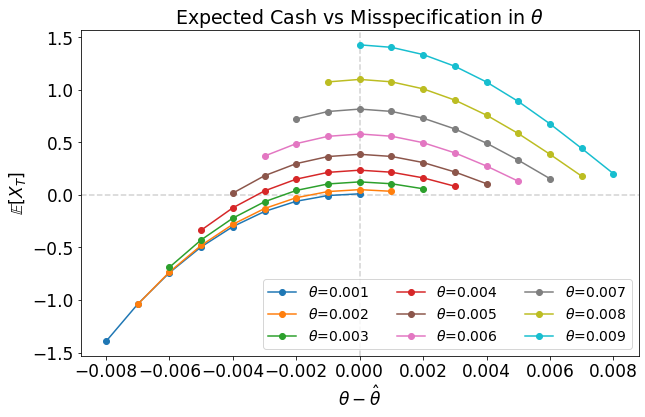}}
    \subfigure[]{\includegraphics[width=0.32\linewidth]{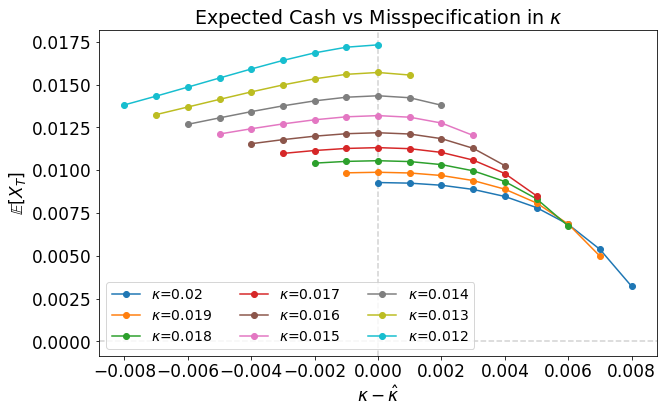}}
    \caption{Expected cash vs the true parameters; $\delta,\theta$ and $\kappa$ when the optimal strategy is found using $\hat{\delta},\hat{\theta}$ and $\hat{\kappa}$.}
    \label{fig:cash_vs_delta_overall}
\end{figure}
\subsubsection{Estimation of impact parameters}\label{subsec: Estimate parameters SLSQP}
Here we present a parameter estimation method based on
\cite{Almgren_Thum_Hauptmann_Li}.  In \cite{Almgren_Thum_Hauptmann_Li}, the authors used real metaorders\footnote{Metaorders are large trading orders that are split into pieces and executed incrementally.}, thus here we estimate the parameters by \textit{simulating} metaorders from a TWAP strategy in a time window $[1,T]$ leaving the dynamics of the market unchanged from previous numerical experiments. Specifically, we assume that for the metaorder $i$ the fundamental and the execution price, respectively, are
\begin{align*}
    S_t^{(i)} & = S_{t-1}^{(i)}+\theta\text{sign}(v^{(i)})|v^{(i)}|^{\delta}+\sigma\xi_t^{(i)}\\
    \Tilde{S}_t^{(i)}& = S_{t-1}^{(i)}+\kappa v^{(i)},~~~~~t=1,\cdots,T~~\text{and}~~i=1,\cdots,N.
\end{align*}
We assume that at the $i$-th metaorder we want to execute $q_i$ shares\footnote{In \cite{Almgren_Thum_Hauptmann_Li} this variable is denoted by $X$.} at a constant speed 
for any order $i$ and it is $v^{(i)} = \frac{q_i}{T}$ with $q_i \sim U[1,20]$.

For each metaorder $i$ the fundamental price at time $t$ can be written as:
\begin{equation*}
    S_{t}^{(i)} = S_0 + t\theta(v^{(i)})^{\delta}+\sigma\sum_{s=1}^{t}\xi_{s}^{(i)}
\end{equation*}
thus, at the end of metaorder $i$, at time $T$ the fundamental price it is
\begin{align*}
    S_T^{(i)} & = S_0+T\theta (v^{(i)})^{\delta}+\sigma\sum_{s=1}^{T}\xi_{s}^{(i)}\Leftrightarrow\\
    \underbrace{\frac{S_T^{(i)}-S_0}{S_0}}_{I_i} & = \frac{T\theta(v^{(i)})^{\delta}}{S_0}+\text{noise}
\end{align*}
The execution price at time $t$ can be written as:
\begin{equation*}
    \Tilde{S}_t^{(i)} = S_0+(t-1)\theta (v^{(i)})^{\delta}+\sigma\sum_{s=1}^{t-1}\xi_{s}^{(i)}+\kappa v^{(i)}
\end{equation*}
and the realised price (following the terminology in \cite{Almgren_Thum_Hauptmann_Li}) for the metaorder $i$, due to the TWAP strategy, is 
\begin{equation*}
    \Bar{S}_i = \frac{\sum_{t=1}^Tv^{(i)}\Tilde{S}_{t}^{(i)}}{\sum_{t=1}^Tv^{(i)}} = \frac{\sum_{t=1}^T\frac{q_i}{T}\Tilde{S}_{t}^{(i)}}{\sum_{t=1}^T\frac{q_i}{T}} = \frac{\frac{1}{T}q_i\sum_{t=1}^T\Tilde{S}_{t}^{(i)}}{q_i}
\end{equation*}
\begin{align*}
    \Bar{S}_i & = \frac{1}{T}\sum_{t=1}^T \Tilde{S}_{t}^{(i)}\\
    & = \frac{1}{T}\sum_{t=1}^T\Big[S_0+(t-1)\theta (v^{(i)})^{\delta}+\kappa v^{(i)}+\sigma\sum_{s=1}^{t-1}\xi_{s}^{(i)}\Big]\\
    & = S_0+\theta \frac{T-1}{2}(v^{(i)})^{\delta}+\kappa v^{(i)}+\text{noise}\Leftrightarrow\\
    \underbrace{\frac{\Bar{S}_i-S_0}{S_0}}_{J_i} & = \frac{T-1}{2T}I_i+\frac{\kappa v^{(i)}}{S_0}+\text{noise}
\end{align*}
Therefore, we end up with the following two regressions\footnote{Notice that in \cite{Almgren_Thum_Hauptmann_Li} they use the approximation $\frac{T-1}{T}\approx 1$, to simplify the second regression equation but we observe, without this approximation, that the estimations for the temporary impact are more robust, since for $T=21$ the approximation is not sufficiently accurate.}:
\begin{align}
    I_i & = \frac{T\theta(v^{(i)})^{\delta}}{S_0}+\text{noise}\\
    J_i-\frac{T-1}{2T}I_i & = \frac{\kappa v^{(i)}}{S_0}+\text{noise}
\end{align}
To find an estimate for $\delta$ we minimize the residuals sum of squares (RSS): 
\begin{equation}
    \hat{\delta} = \argmin_{\delta\in(0,1)}\sum_{i=1}^N\Big(I_i-\hat{\theta}(\delta)\frac{T}{S_0}\Big(\frac{q_i}{T}\Big)^{\delta}\Big)^2
\end{equation}
where $\hat{\theta}(\delta)=\frac{S_0}{T}\frac{\sum_{i=1}^N(\frac{q_i}{T})^{\delta}I_i}{\sum_{i=1}^N(\frac{q_i}{T})^{2\delta}}$ (see Appendix \ref{app: Estimating the Parameters in SLSQP} for a derivation of this formula). Then $\hat{\theta} = \hat{\theta}(\hat{\delta})$. Similarly, for the estimation of $\kappa$ it is
\begin{equation*}
    \hat{\kappa} = S_0\frac{\sum_{i=1}^N(\frac{q_i}{T})(J_i-\frac{T-1}{2T}I_i)}{\sum_{i=1}^N(\frac{q_i}{T})^2}.
\end{equation*}

\subsection{Numerical experiments}

We now perform numerical experiments to show how the estimation method proposed above works in finite sample. The parameter of the impact model are those in Table \ref{tab:params RL and NS}. For the choice of the volatility $\sigma$, we adopt a realistic estimate by using real data. First we use an estimate of the daily realized volatility (RV) (see \cite{Real_Vol}) of MSFT from April 2018 and we set $\sigma_\text{daily} = 0.026$. Similarly to \cite{Almgren_Thum_Hauptmann_Li} we assume that an average metaorder lasts $0.39$ days, thus
\begin{equation*}
    \sigma_\text{metaorder}:=\sigma = \sigma_\text{daily}\sqrt{0.39}=0.0168.
\end{equation*}
Moreover for the impact parameters; $\theta$ and $\kappa$, we choose them in a way such that the ratios $\frac{\sigma}{\theta}$ and $\frac{\sigma}{\kappa}$ are 56 and 167 respectively. These ratios are the same with the ones reported on Table B from \cite{Almgren_Thum_Hauptmann_Li} in which they report the summary statistics from a sample of real metaorders. Finally, we consider the estimation on samples of size $N\in\{500,5000\}$.

The results are shown in Table \ref{tab: estimation of theta,kappa and delta}. As expected, the estimation is more precise for larger samples\footnote{We conducted experiments for larger values of $N$, which provide indications that the estimators are asymptotically consistent.} and for smaller volatility. The interesting aspect is that the estimation of the parameter $\delta$ is  very noisy and usually upward biased. As suggested by Fig.~(\ref{fig:long-vs-delta-pnd}), this overestimation leads to more conservative strategies (i.e. with smaller long position), reducing the potential PnL of the dynamic arbitrage strategy when the DGP is known but the impact parameters must be estimated from a finite sample of data. 

\begin{table}[]
    \centering
    \small
    \begin{tabular}{c| p{.6cm} p{.5cm} c c c c c || c c}
    
        Parameter & $(10^{-4})$ & $\Big[\hat{\theta}$ & SE of $\hat{\theta}$ & $\hat{\kappa}$ & SE of $\hat{\kappa}\Big]$ & $\hat{\delta}$ & SE of $\hat{\delta}$ & $\sigma$ & $N$\\[5pt]\hline
        \multirow{7}{*}{Estimate} & &$0.61$ & $3$ & $2$ &  $17$ & $0.99$ & $1.88$ & $0.0168$ & \multirow{3}{*}{500}
        \\
        & &$1$ & $0.48$ & $4$ & $3$ & $0.79$ & $0.94$ & $0.0030$
        \\
        & &$1$ & $0.23$ & $1$ & $2$ & $0.20$ & $0.31$ & $0.0020$
        \\[10pt]
        
        & &$1$ & $0.63$ & $9$ &  $5$ & $0.26$ & $0.61$ & $0.0168$ & \multirow{3}{*}{5,000}
        \\
        & &$1$ & $0.12$ & $3$ & $0.09$ & $0.48$ & $0.22$ & $0.0030$
        \\
        & &$1$ & $0.07$ & $3$ & $0.65$ & $0.15$ & $0.10$ & $0.0020$
         \\\hline
         True & &$1$ & & $3$& &$0.1$&\\
        \hline
    \end{tabular}
    \caption{Estimate and Standard Error (SE) of the parameters when $T=21$. Note that the values for the first 4 columns are in basis points.}
    \label{tab: estimation of theta,kappa and delta}
\end{table}

\section{Dynamic Arbitrage via deep RL}
\label{sec:DDPG}
Having set the stage with numerical techniques, we now pass to studying the problem with deep RL. By means of RL we aim at assessing whether a learner, who has no direct knowledge of either the impacts structure or of the values for the impacts, would be able to \textit{learn} a dynamic arbitrage strategy, employing as little information on the market as possible. In a spirit similar to that of \cite{Ning_2021} and \cite{Macri_2025}, we solve the round trip optimal trading problem in the context of the model reported in Eq.\eqref{eq:fund price} with DDPG.   

\subsection{Environment, states and reward}

The deep RL experiment runs over a number $M$ of training episodes of trading where the agent learns the optimal strategy. The agent, for each time step $t$, has access to a simple state representation of the environment called $\state = (q_t, t, S_t)$, that is the tuple made by current inventory at time step $t$, time step $t$ and permanently impacted price at time step $t$, i.e. $S_t$. 

The price dynamics evolve as in Eq.\eqref{eq:fund price} and the values of the parameters are the same as the ones described on Table \ref{tab:params RL and NS}, therefore the agent has to make a trading decision $v_t$ (i.e. buy/sell action to adopt) based on the knowledge of the state $\state$, such that the reward at each time step is maximal. The reward, seen at each trading step is defined as:
\begin{equation}
\label{eq:reward}
   r_t = S_t v_t-\kappa v_t^2
\end{equation}
Where $v_t$ is the traded volume at time $t$, $S_t$ is the permanently impacted price as in Eq. \eqref{eq:fund price} and $\kappa$ is the temporary impact parameter. The reward and the actions are added to the tuple $\state$ when they are stored in the memory used by the algorithm to train its neural networks (further details below).

\subsection{The learning algorithm}

 The DDPG algorithm is composed by two main neural networks: a \textit{critic} and an \textit{actor} network. It allows for continuous controls and combines deterministic policy gradients with deep function approximation. Originally introduced by \cite{pmlr-v14-lillicrap}, DDPG extends the deterministic policy gradient framework of \cite{pmlr-v32-silver14} by leveraging neural networks to approximate both the policy (actor) and the action-value function (critic). The actor network deterministically maps states to actions, while the critic estimates the expected return. To stabilize training, DDPG employs target networks and an experience replay buffer, borrowing key ideas from Deep Q-Network. The critic is trained by minimizing the Bellman error using targets constructed from the reward signal and the discounted estimate of future returns, while the actor is updated via the deterministic policy gradient, maximizing the critic’s evaluation of the chosen actions.

\subsubsection{The actor and critic networks}
The critic and actor neural networks are feed-forward networks with four and three input neurons, respectively, a number of $l_{\text{NN}}$ layers of $d_{\text{NN}}$ hidden nodes, each with \verb!SiLu! activation function, in the Actor network, the final layer has \verb!tanh! activation function. 

We refer to the actor network as $\pi$, since it is the network that parametrises the policy directly, and to the critic network as $Q$, as the architecture of the critic part of the algorithm relies fully on double deep Q-learning.



At the beginning of the training loop we initialise the networks $\pi, Q$ with random weights $\mu_\pi, \mu_Q$. As training unfolds, $(\mu_\pi, \mu_Q) \to (\mu^*_\pi, \mu^*_Q)$, where $(\mu^*_\pi, \mu^*_Q)$ are the optimal weights that correspond to the optimal policy that maximises the the sum of discounted rewards. 
 
Therefore, during training the $Q$ network, in order to avoid an overestimation of the $Q$-values for the considered state-action pairs, we use a $Q_{\text{tgt}}$ network that is initialised at the beginning of the training routine as a copy of the $Q$ network with weights $\mu_{Q_{\text{tgt}}} = \mu_{Q}$, thus employing double deep Q-learning for the Critic part of the algorithm, the weights of the $\mu_{Q_{\text{tgt}}}$ are then periodically set equal to $\mu_Q$. In order to do so, we have adopted the soft update technique as described in the original DDPG paper by \cite{pmlr-v14-lillicrap}, with the soft update parameter set to $0.001$.

During training, as the trading unfolds we chose to keep memory of the states seen and the actions, volumes traded, at each time step into a memory of length $m$. As soon as the memory has at least length $B$, which is the length of a batch, the algorithm starts training the two networks via gradient descent. In any case, every time the networks are trained, batches of past trading experiences are randomly sampled from the memory.

\paragraph{The critic.}

For each training iteration, we train a critic network $Q$. To this end, we add an additive exploration zero mean noise to the output of the actor network $\pi$ (whose training is described below), whose variance declines at each iteration\footnote{Specifically, before learning loop starts we set $a=1/T$ and $\varepsilon_{\text{min}} = 0.1$ 
Then, for each train iteration $m$ we set $\varepsilon = \max(\varepsilon - a, \varepsilon_{\text{min}})$. In this way, the more we iterate in the learning routine the lower the $\varepsilon$ value. $\varepsilon$ is then used as an additive noise in the Actor-Critic architecture used in both the learning algorithms.}.
The additive noise, $\varepsilon$, helps the Actor neural network to explore the range of inventory holdings to be held depending on the state.
Thus, while training the $Q$ network, $v_{{t+1}} = \pi(\state|\mu_\pi) + \mathcal{N}(0, \varepsilon)$, where \(\pi(\state|\mu_\pi)\) is the output of the Actor network. Clearly, as the training unfolds and $\varepsilon$ becomes smaller, the exploration noise diminishes and the policy chosen by the network $\pi$ becomes deterministic. In this way, given the weights $\mu^*_\pi$, found at the end of the training, the Critic network directly maps a state into an action such that:
\begin{equation}
    v^{*}_t = \pi(\state|\mu^*_\pi)
\end{equation}
Once the volume is obtained, it is used to calculate the reward for each time-step using the equation Eq.\eqref{eq:reward}.

To train the $Q$ network, we update the weights $\mu_Q$ such that $\mu_Q = \argmin_{\mu_Q} \L_1(\mu_Q; \mu_{Q_{\text{tgt}}})$, by taking a single gradient step on the loss 

\begin{align}
        \L_1 &= \frac{1}{b} \sum^b_{i=1} (Q(\state, v_{{t+1}}|\mu_Q) - y_t)^2\\
        y_t &= r_{t} + \gamma Q_{\text{tgt}}(\state', \pi(\state'|\mu_\pi)|\mu_{Q_{\text{tgt}}})
\end{align}

Where $\state '$ is the next state where the environment evolves after the agent has taken the action $v_t$.
The weights of the Critic network $\mu_Q$ are then updated using gradient descent.

\paragraph{The actor.} 

For each training iteration we train the actor network $\pi$, which is the neural network responsible for choosing the action $v_t$. 
To do this, we feed the $\pi$ network with a batch of $B$ states $\state$.  The $\pi$ network returns a volume to be sold or bought $v_t$, which is in turn fed to the Critic $Q$ network, along with the state $\state$. The output of the $Q$ network is the $Q$ value which is maximized during training. Specifically, the weights $\mu_\pi$ of the Actor network are found by maximizing the output of the critic network with fixed weights $\mu_Q$. As before we perform a single gradient step to minimise the loss:

\begin{equation}
    \L_2 = - \frac{1}{b} \sum^b_{i =1} Q(\state, \pi(\state'|\mu_\pi)|\mu_Q)
\end{equation}
where $\pi(\state|\mu_\pi)$ is the output of the actor network with current weights $\mu_\pi$.
The weights $\mu_\pi$ are updated using the policy gradient theorem. The derivative of the loss function $\L_2$ with respect to the weights of the Actor network is:
\begin{equation}
    \nabla_{\mu_\pi}\L_2 = - \frac1b\sum^b_{i=1}\left[ \nabla_{a}Q(\state, a^{(i)}|\mu_Q)|_{a^{(i)} = \pi(\state'|\mu_\pi)} \nabla_{\mu_\pi}(\state|\mu_\pi) \right]
\end{equation}
where the first part in the square brackets tells us how sensitive the $Q$ network is to changes in the action chosen by the Actor $\pi$, this is the gradient of the $Q$-value with respect to the action and evaluated at the action suggested by the Actor network, in this context for ease of notation we use $a^{(i)}=v^{(i)}_t$ to denote the action for the \(i\) state among the $b$ considered in the batch. The second part inside the brackets is the gradient of the the actor's policy and accounts for how much the action $v_t$ changes with respect to changes in the $\pi$ network parameters. We update the weights for the $\pi$ network with gradient descent. We repeat this training routine a number $l$ of times in order to assess how the quality of the actions chosen by the Actor changes and, in this way, to thoroughly explore weights combinations for the $\pi$ network. 

The complete training algorithm for both the Critic and the Actor neural networks is reported in Algorithm~\ref{alg:ddpg}. Both the Actor and the Critic neural networks are feed-forward fully connected neural networks, for the gradient descent we use the Weighted ADAM with learning rate of $0.1$. As training unfolds, we adopt a scheduler for the learning rate, thus diminishing its value as the agent makes experience and the neural networks are trained, once very $100$ episodic iterations (i.e. time step over training episodes) the learning rate is diminished following the rule $lr \leftarrow lr \times 0.99$.
The features for the DDPG algorithm are normalised in the domain $[-1, 1]$. Relevant parameters for the networks and the market model are reported in Table~\ref{tab:params_RL_and_ABM}.

\begin{table}[h]
\centering
\caption{Symbol and value used for each parameter in the DDPG algorithm.}
\begin{tabular}{|c|c|c|c|}
\hline
Description &  Symbol & Value \\
\hline
Training Episodes & $M$ & 500\\
Memory Length & $m$ & 10e5\\
Batch Size & $B$ & 128\\
Learning Rate actor/critic & $\eta_{\pi} = \eta_Q$ & 0.1\\
Nodes actor/critic & $N_{\pi}=N_{Q}$ & 64\\ 
Layers actor/critic & $L_{\pi}=L_{Q}$ & 3\\ 
Discount Factor & $\gamma$ & 0.99\\
\hline
\end{tabular}
\label{tab:params_RL_and_ABM}
\end{table}
\begin{figure}
    \centering
    \includegraphics[width=1\linewidth]{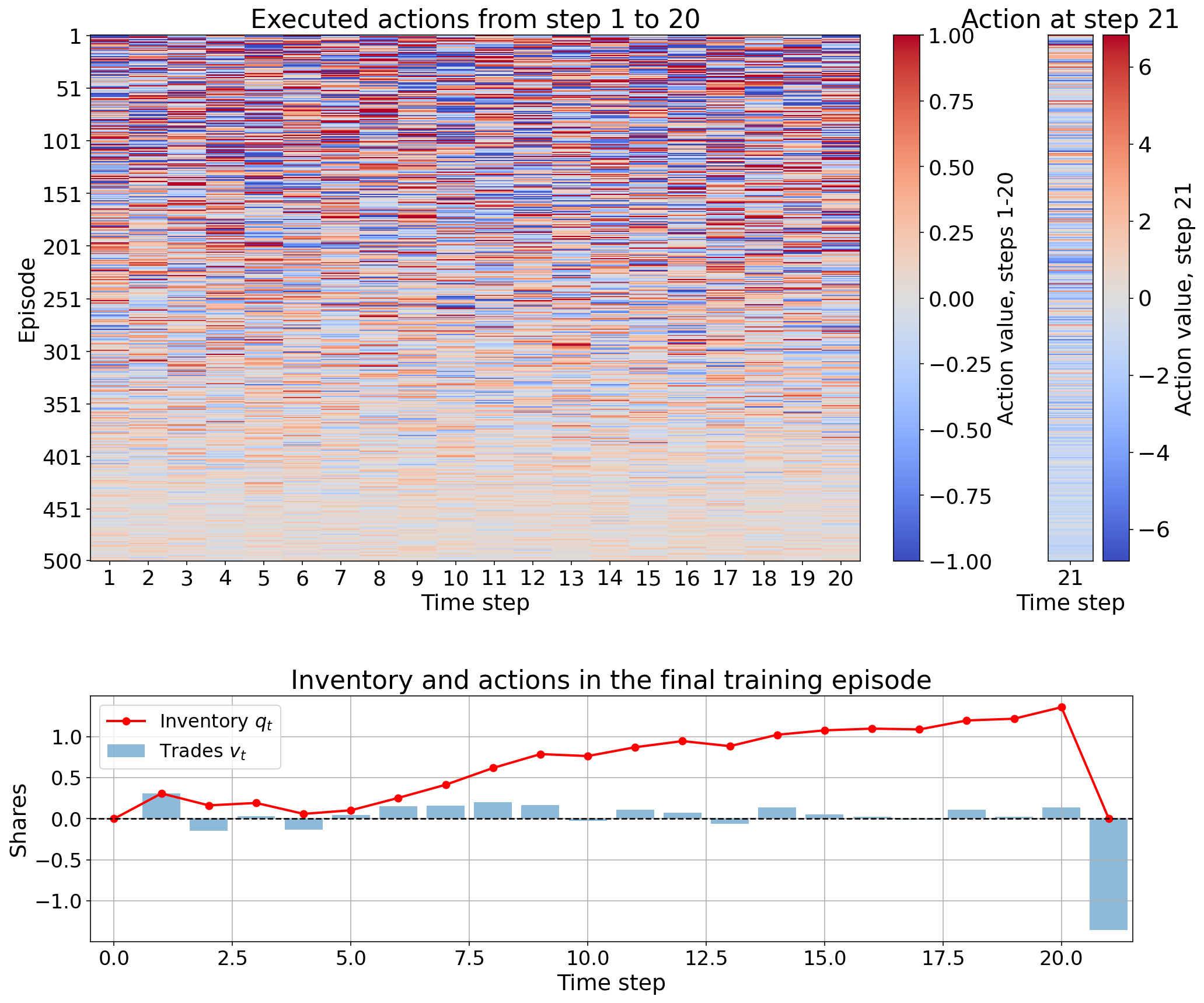}
    \caption{{\bf Top:} Heatmap of the executed actions chosen by DDPG during the training phase of 500 episodes. {\bf Bottom:} The actions chosen from DDPG along with the corresponding inventory during the final training episode.}
    \label{fig:Heatmap in training - 500episodes}
\end{figure}

\begin{algorithm}[ht]
\caption{Training of DDPG agent for optimal trading}
\begin{algorithmic}
\Require
{\\

Initialise with random weights $Q$ and make a copy $Q_{\text{tgt}}$;\\
Initialise with random weights $\pi$\\
Initialise the memory with max length $m$. Set $\varepsilon = 1$, $B$ batch size, $M$ train episodes;\\
Set market dynamics parameters;\\
}
\For {i in M}
    \State Set $S^i_0 = S_0$;\\
    \For{t in T}
        \State {$\state \gets (q_t,t, S^i_t)$;}
        \State {$v_t =  \pi(\state|\mu_\pi);$}\\
        \State{$r_t\gets S^i_{t-1} v_t - \kappa v^2_t$;}\\
        \State{$S^i_{t-1}\to S^i_t$;}\Comment{Generate $S^i_t$ from $S^i_{t-1}$}\\
        \State{$\state' \gets (q_{t+1}, t+1, S_{t+1})$ ;}\\
        \State {Memory $\gets(\state,r_{t},v_t, \state')$;} \Comment{Memory storing}
        \\
        \If{Length of memory $\ge$ $B$}
            \For{j in $B$}
                \State{Sample a batch of $(\state^j,r^j_{t},v^j_t,\state^{', j})$ from memory;}
                \State train $\pi$;
                \State train $Q$
            \EndFor\\
            \If{Length of memory = $L$} halve the length of memory\EndIf
        \EndIf
        \State{After $m$ iterations decay $\epsilon=\epsilon - a$;}
        \State{After $m$ iterations $\theta_{\text{tgt}}\gets\theta_{\text{main}}$;}
    \EndFor
\EndFor
\end{algorithmic}
\label{alg:ddpg}
\end{algorithm}
\begin{figure}[h!]
    \centering
 \includegraphics[width=1\linewidth]{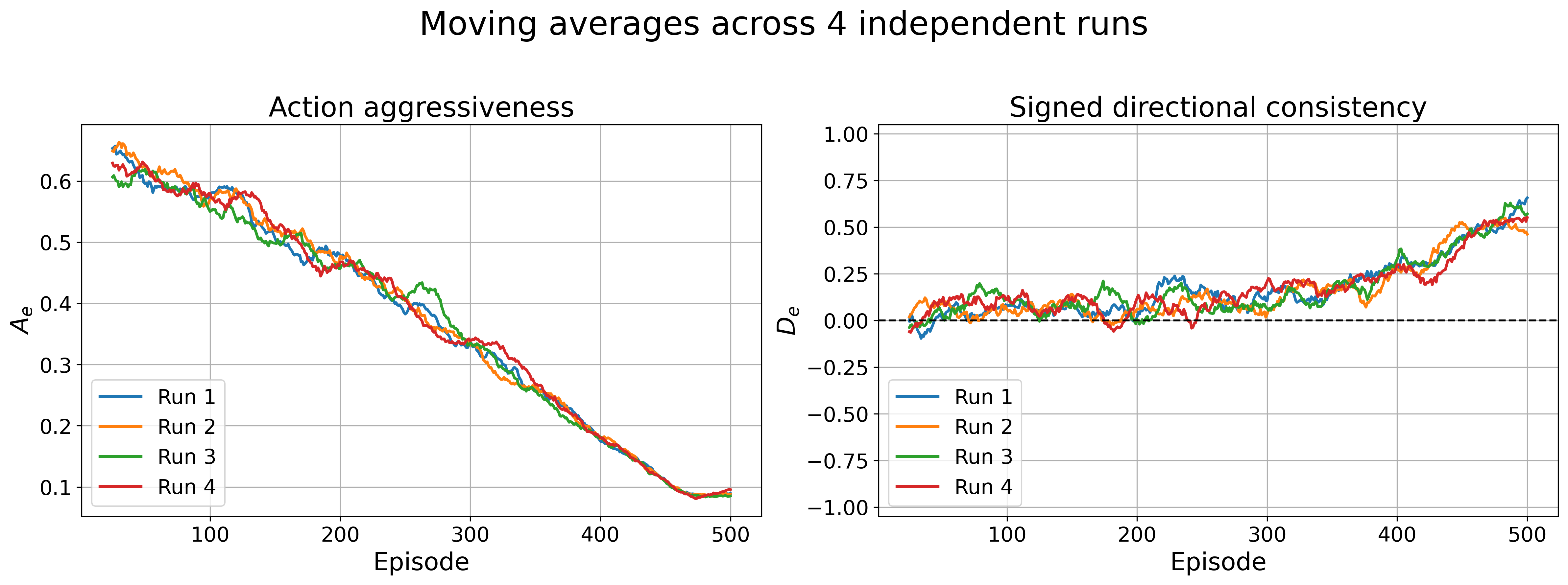}
    \caption{The 25-episode moving averages {\bf Left:} of the action aggressiveness and {\bf Right:} of signed action directionality during training.}
    \label{fig:Aggresiveness and Directionality}
\end{figure}
\paragraph{Training results.} 
We train the agent using the procedure described above and terminate training after episode 500, since no noticeable improvement in the learned policy is observed beyond this point. The top panel of Fig.~(\ref{fig:Heatmap in training - 500episodes}) shows a heatmap of the actions selected at each time step and in each episode during one run of training. At the beginning of the training phase, the agent alternates between exploratory buy and sell actions. As training progresses, however, the policy converges toward a directional buying strategy over the first 20 time steps, followed by a sell execution at the final step to satisfy the round-trip constraint. The bottom panel shows the final learned strategy.

To characterize how the strategies learned by DDPG evolve across episodes, we define the action aggressiveness in episode $e$ as:
\begin{equation}\label{eq: aggressiveness}
    A_e = \frac{1}{20}\sum_{t=1}^{20}|v_{e,t}|
\end{equation}
and the signed directionality in episode $e$ as:
\begin{equation}\label{eq: directionality}
    D_e = \frac{\sum_{t=1}^{20} v_{e,t}}{\sum_{t=1}^{20} |v_{e,t}|},~~~~~\text{with}~~-1<D_e<1
\end{equation}
where $v_{e,t}$ is the action chosen (i.e. the volume traded) in episode $e$ at time $t$. The quantity $A_e$ measures how aggressive in absolute terms is the pump phase of the manipulative strategy, while $D_e$ tells us how consistently the strategy buys or sells in the pump phase. The left panel of Fig.~(\ref{fig:Aggresiveness and Directionality}) displays the evolution of $A_e$ during 4 training runs. We observe that, as training progresses, DDPG converges toward less aggressive strategies. The right panel of Fig.~(\ref{fig:Aggresiveness and Directionality}) shows $D_e$ across the same runs. During the first episodes, this quantity is close to zero, indicating no clear directional preference in the first 20 time steps. As training proceeds, however, the strategy converges toward a positive direction over the first 20 time steps. Notice that this symmetry breaking is consistent with the form of the reward function in Eq. (\ref{eq:reward}). At each time step $t$, DDPG seeks an action $v_t$ that maximizes the instantaneous reward. Therefore, by differentiating the reward with respect to $v_t$ and setting the derivative equal to zero, we obtain the action that maximizes the reward at time $t$:
\begin{equation}
    v_t^* = \frac{S_t}{2\kappa}>0.
\end{equation}
If on the other hand, one defines the reward as:
\begin{equation*}
    r_t = -S_tv_t-\kappa v_t^2
\end{equation*}
then we will observe a negative direction at the final episodes (see Fig.~(\ref{fig:Aggr vs Dirct}) in Appendix \ref{app: training DDPG}).





\section{Discovering price manipulation strategies in finite data samples}
\label{sec:comparison}

We now consider the problem of a trader who tries to find a price manipulation strategy in a market where dynamic arbitrage is allowed by the nonlinearity of permanent impact. Critically, we assume that the trader has a finite amount of data to identify the strategy, and we compare two approaches.

In the first one, the agent has $N$ TWAP metaorders to estimate the parameters of the AC nonlinear model with the regression method presented above. Then the agent uses the estimated parameters in an SLSQP optimizer to find the optimal price manipulation and it tests it on 1000 out of sample simulations of manipulative strategies. We consider $N=500$ and $N=5000$ and we term these strategies $\text{regSLSQP}_{500}$ and $\text{regSLSQP}_{5000}$, respectively. In the second approach, the agent is agnostic about the underlying impact model and uses the RL approach described above to learn the manipulation strategy on $N=500$ episodes. As before, the agent tests it on 1000 out of sample simulations of manipulative strategies. We term this strategy DDPG.

As benchmarks we use the PnL obtained (i) with the optimal strategy with only two trading velocities and (ii) with the optimal strategy obtained with SLSQP. Clearly, in both cases we use the correct DGP and the true parameters. The two strategies are termed PnD and SLSQP, respectively. As above, we consider three levels of volatility.


Table \ref{tab:final comparison} summarizes, for each method, the average terminal cash $X_T$, the corresponding standard error, and the results of the t-test together with the associated p-values over 1000 out of sample implementations of the learned strategies.
For the regSLSQP methods, we follow a different approach. Since there is a large variability in the estimated impact parameters from a finite number of TWAP executions, the PnL in the out of sample set depends strongly on the specific realization of 500 or 5000 TWAPs used in the estimation. To mitigate this effect, we replicate the whole procedure 100 times and we report the mean and the average standard deviation in these 100 experiments. Moreover, the t-test column reports the percentage of the replications (among the 100) that reject the null-hypothesis of zero expected cash.


The results show that when volatility is high ($\sigma = 0.0168$), none of the methods among PnD, SLSQP and DDPG produces a statistically significant positive terminal cash. This is due to the finiteness of the test set and the large volatility. The regSLSQPs produce a statistically significant negative expected terminal cash in $46\%$ and $33\%$ of the cases when $500$ and $5000$ metaorders are used, respectively, for the estimation of the impact parameters. Thus in these conditions, even if manipulation is possible, large volatility prevents its exploitation.

For intermediate volatility values ($\sigma = 0.003$)  PnD, $\text{regSLSQP}{5000}$, and DDPG achieve statistically significant positive cash flows. PnD is the best-performing method followed by DDPG and then by $\text{regSLSQP}{5000}$. Moreover, SLSQP even though it has complete knowledge of both the functional form of the market-impact model and of the parameters does not produce statistically significant positive cash. On the contrary, the model based approach $\text{regSLSQP}{500}$ estimated on a smaller sample provides negative expected cash flow. Importantly, notice that the same sample size is sufficient for the RL based approach DDPG to learn a systematically profitable strategy.

Finally, when volatility is low ($\sigma = 0.002$), all methods except $\text{regSLSQP}_{500}$ generate statistically significant positive terminal cash. However, among the two sample based methods, $\text{regSLSQP}_{500}$ outperforms DDPG. This is somewhat expected because with low volatility and relatively large estimation set ($N=5000$), the parameters of the impact model are accurately estimated, while the DDPG is still trained on $N=500$ episodes.

\begin{table}[h!]
    \centering
    \begin{tabular}{c|c c c c|c}
        Method &  Mean $(10^{-4})$ & Std & t-test & p-value   & $\sigma$\\[5pt]\hline\hline
        PnD & $32$ & $0.1442$ & $0.70$ & $0.480$ & \multirow{5}{*}{$0.0168$}\\
        SLSQP & $48$ & $0.2805$ & $0.54$ & $0.580$\\
         \cline{1-5}
        \cellcolor{gray!20}$\text{regSLSQP}_{500}$ & \cellcolor{gray!20}$-5942$ & \cellcolor{gray!20}$1.5924$ & \cellcolor{gray!20}$46\%$ & \cellcolor{gray!20}$<0.05$\\
        \cellcolor{gray!20}$\text{regSLSQP}_{5000}$ & \cellcolor{gray!20}$-4082$ & \cellcolor{gray!20}$1.24$ & \cellcolor{gray!20}$33\%$ & \cellcolor{gray!20}$<0.05$\\
        \cellcolor{gray!20}DDPG & \cellcolor{gray!20}$11$ & \cellcolor{gray!20}$0.0434$ & \cellcolor{gray!20}$0.83$ & \cellcolor{gray!20}$0.400$\\ 
        \noalign{\vskip 2pt}
        \hline
        \noalign{\vskip 2pt}
        \hline
        \noalign{\vskip 2pt}
        PnD & $14$ & $0.0257$ & $1.75$ & $0.070$ & \multirow{5}{*}{$0.0030$}\\
        SLSQP & $25$ & $0.0501$ & $1.58$ & $0.110$\\
         \cline{1-5}
        \cellcolor{gray!20}$\text{regSLSQP}_{500}$ & \cellcolor{gray!20}$-2562$ & \cellcolor{gray!20}$0.1978$ & \cellcolor{gray!20}$34\%$ & \cellcolor{gray!20}$<0.05$\\
        \cellcolor{gray!20}$\text{regSLSQP}_{5000}$ & \cellcolor{gray!20}$1$ & \cellcolor{gray!20}$0.0561$ & \cellcolor{gray!20}$5\%$ & \cellcolor{gray!20}$<0.05$\\
        \cellcolor{gray!20}DDPG & \cellcolor{gray!20}$5$ & \cellcolor{gray!20}$0.0077$ & \cellcolor{gray!20}$2.34$ & \cellcolor{gray!20}$0.010$
        \\ 
        \noalign{\vskip 2pt}
        \hline
        \noalign{\vskip 2pt}
        \hline
        \noalign{\vskip 2pt}
        PnD & $13$ & $0.0171$ & $2.39$ & $0.010$ & \multirow{5}{*}{$0.0020$}\\
        SLSQP & $23$ & $0.0334$ & $2.22$ & $0.020$\\
         \cline{1-5}
        \cellcolor{gray!20}$\text{regSLSQP}_{500}$ & \cellcolor{gray!20}$-1071$ & \cellcolor{gray!20}$0.0852$ & \cellcolor{gray!20}$54\%$ & \cellcolor{gray!20}$<0.05$\\
        \cellcolor{gray!20}$\text{regSLSQP}_{5000}$ & \cellcolor{gray!20}$13$ & \cellcolor{gray!20}$0.0349$ & \cellcolor{gray!20}$36\%$ & \cellcolor{gray!20}$<0.05$\\
        \cellcolor{gray!20}DDPG & \cellcolor{gray!20}$5$ & \cellcolor{gray!20}$0.0051$ & \cellcolor{gray!20}$3.26$ & \cellcolor{gray!20}$0.001$\\\hline
    \end{tabular}
    \caption{Comparison among the methods after $1000$ simulations of the price evolution as in Eq.(\ref{eq:fund price})-(\ref{eq: exec price}).}
    \label{tab:final comparison}
\end{table}

As expected, volatility is the key parameter for the effective profitability of dynamic arbitrage strategies. When the model and the parameters are known (SLSQP and PnD), large volatility prevents the exploitation of the arbitrage, at least in a finite testing sample. When the model and/or the parameters are unknown, the volatility also enters in the estimation (regSLSQP) and learning (DDPG). The surprising result is that for intermediate volatility and small sample a model agnostic method is more effective of one knowing the model but not the parameter in identifying profitable strategies. Even more surprising is that in this volatility regime a ten-fold larger training set of the regSLSQP is not sufficient to beat the performance of DDPG. Finally, even for small volatility, DDPG outperforms regSLSQP when the training tests are equally small.  



In conclusion, the DDPG algorithm generates positive cash flows while relying on only minimal information: the current time step, the remaining inventory, and the permanently impacted asset price. Despite this simple state representation, the agent achieves performance comparable to numerical optimization methods, with relatively low variability in terminal cash $X_T$. This suggests that reinforcement learning algorithms can discover round-trip trading strategies that create dynamic arbitrage opportunities.

\section{Conclusion}

This paper studied whether a model-free reinforcement learning algorithm can learn to exploit price-manipulation opportunities more effectively than a traditional model-based approach when both methods are trained on finite samples of market data. We considered a single-asset optimal execution problem with concave permanent market impact, a setting in which dynamic arbitrage may arise through round-trip trading strategies.

We first established, both analytically and numerically, the existence of profitable manipulative strategies in discrete time. While these strategies can be derived under full knowledge of the market-impact model, implementing them in practice requires estimating latent impact parameters from data, introducing an additional source of uncertainty. This observation motivates the central comparison of the paper: a model-based optimization procedure with correctly specified dynamics but estimated parameters versus a model-free reinforcement learning agent that learns trading policies directly from interaction with the environment.

Our numerical experiments show that the DDPG agent is capable of learning profitable round-trip strategies without explicit knowledge of the underlying market-impact model. More importantly, when both approaches are constrained by the same amount of training data, reinforcement learning frequently outperforms the model-based procedure because it avoids the intermediate and error-prone parameter-estimation step. The advantage is most pronounced in regimes where parameter uncertainty materially affects optimization, whereas the model-based approach naturally regains superiority when estimation becomes sufficiently accurate.

The broader implication of these findings is methodological as much as financial. The paper does not argue that reinforcement learning creates manipulation where none exists; rather, it shows that, whenever the market environment admits exploitable dynamic arbitrage, a model-free learning algorithm can identify such opportunities directly from data and may do so more efficiently than a correctly specified parametric model estimated on finite samples. This highlights an important distinction between the theoretical existence of manipulation and its practical discoverability.

From a market-design perspective, the results also carry regulatory implications. Reinforcement learning agents optimize the objective they are given, without distinguishing between economically desirable profits and profits generated through behaviour that may undermine market integrity. Consequently, environments that inadvertently admit manipulative incentives may be particularly vulnerable to increasingly sophisticated learning algorithms. As reinforcement learning becomes more widely adopted in automated trading systems, ensuring that reward functions, market design, and regulatory constraints are aligned with no-manipulation principles will become increasingly important.

Several directions for future research naturally follow from this work. Extending the analysis to more realistic market-impact models, including transient or stochastic impact, would help assess the robustness of the results beyond the Almgren-Chriss framework. Likewise, studying richer market environments with limit-order-book dynamics, multiple strategic agents, or cross-asset interactions would provide a more comprehensive understanding of how learning algorithms exploit market structure. Finally, integrating explicit no-manipulation constraints or safe reinforcement learning techniques into optimal execution algorithms represents a promising avenue for developing trading systems that remain both profitable and market compliant.

\clearpage

\bibliographystyle{plainnat}
\begingroup
\let\emph\textnormal
\let\textit\textnormal
\let\textbf\textnormal
\bibliography{biblio_rl}
\endgroup

\appendix

\section{Derivations for section \ref{sec: Frame the problem}}\label{app: Frame the problem}
In this appendix we first derive the expected cash flow and then we give the proof of Proposition (\ref{prop1}).

We derive Eq.(\ref{eq: final cash-flow_main text})
by substituting the execution price into the expression for the terminal cash flow and expanding the result:
\begin{align*}
    X_T & = -\sum_{t=1}^T v_t\Tilde{S}_t\nonumber\\
    & = -\sum_{t=1}^T v_t(S_{t-1}+\kappa v_t)\nonumber\\
    & = -\Big[\sum_{t=1}^Tv_tS_{t-1}+\sum_{t=1}^T\kappa v_t^2\Big]
\end{align*}
The recursive Eq.(\ref{eq:fund price}) is written as 
\begin{equation*}
    S_{t-1} = S_0+\sum_{j=1}^{t-1}f(v_j)+\sigma\sum_{j=1}^{t-1}\xi_j
\end{equation*}
hence the terminal cash-flow can be written as:
\begin{equation*}
    X_T = -\sum_{t=1}^Tv_t\Big[\sum_{j=1}^{t-1}f(v_j)+\sigma\sum_{j=1}^{t-1}\xi_j\Big]-\kappa\sum_{t=1}^T v_t^2.
\end{equation*}
We observe that
\begin{equation*}
    \sum_{t=1}^T v_t\sum_{j=1}^{t-1} f(v_j) = \sum_{j=1}^{T-1}f(v_j)\sum_{t=j+1}^T v_t,~~~~~~v_0=0
\end{equation*}
thus it is
\begin{equation}\label{eq: Exp IS}
    \E[X_T] = -\Big[\sum_{t=1}^{T-1} f(v_t)\sum_{j=t+1}^Tv_j+\kappa\sum_{t=1}^T v_t^2\Big].
\end{equation}
The following proof establishes Proposition (\ref{prop1}).
\begin{proof}
Let the vector $\lambda v\in\R^T$ then the cost function is written as:
    \begin{align}\label{eq: well-posedeness}
        c(\lambda v) & = -\Big[\underbrace{\sum_{t=1}^{T-1} f(\lambda v_t)\sum_{j=t+1}^T\lambda v_j}_{F(\lambda v)}+\underbrace{\kappa\sum_{t=1}^T \lambda^2 v_t^2}_{S(\lambda v)}\Big]
    \end{align}
    the first term inside the parenthesis is written as
    \begin{align}
        F(\lambda v) & = \begin{bmatrix}
            f(\lambda v_1)\\
            f(\lambda v_2)\\
            \vdots\\
            f(\lambda v_{T-1})
        \end{bmatrix}^\intercal\begin{bmatrix}
            \sum_{t=2}^T\lambda v_t & 0 & 0 & 0 & \cdots & 0\\
            0 & \sum_{t=3}^T\lambda v_t & 0 & 0 & \cdots & 0\\
            \vdots &  \vdots &  \vdots &  \vdots & \cdots &  \vdots\\
            0 & 0 & 0 & 0 & \cdots & \lambda v_T
        \end{bmatrix}\begin{bmatrix}
            1 \\ 1 \\ \vdots \\ 1
        \end{bmatrix}\\
        & = |\lambda|^{\delta+1}F(v)
    \end{align}
    while for the second term it is clearly true that
    \begin{equation*}
        S(\lambda v) = \lambda^2 S(v).
    \end{equation*}
    hence
    \begin{equation*}
        c(\lambda v) = -[|\lambda|^{\delta+1}F(v)+\lambda^2 S(v)]
    \end{equation*}
    since $\delta<1$ the second term will dominate the convergence, thus the result in Eq.(\ref{eq: well-posedeness}) follows. For the continuity we simply observe that $c(v)$ is a sum of continuous functions.
\end{proof}
\section{Derivations for section \ref{subsec:twoVelo}}\label{app: proofs}
We first prove lemma \ref{lem: expected cash}
\begin{proof}
    \begin{align*}
        \sum_{t=1}^{T-1} f(v_t)\sum_{j=t+1}^Tv_j & = f(\alpha)(-\beta)(T-n)n+f(\alpha)\alpha\sum_{k=1}^{n-1}(n-k)+f(-\beta)(-\beta)\sum_{k=1}^{T-n-1}(T-n-k)\nonumber\\
        & = f(\alpha)(-\beta)(T-n)n+f(\alpha)\alpha[n(n-1)-\frac{n(n-1)}{2}]\nonumber\\&+f(-\beta)(-\beta)[(T-n)(T-n-1)-\frac{(T-n)(T-n-1)}{2}]\nonumber\\
        & = nf(\alpha)(T-n)(-\beta)+f(\alpha)\alpha\frac{n(n-1)}{2}+f(-\beta)(-\beta)\frac{(T-n)(T-n-1)}{2}\\
        & = -\frac{n\alpha}{2}[(n+1)f(\alpha)+f(-\beta)(T-n-1)]
    \end{align*}
    For the temporary impact term
\begin{align*}
    \kappa\sum_{t=1}^{T} v_t^2 & = \kappa[n\alpha^2+(T-n)\beta^2]
\end{align*}
thus
\begin{equation*}
    \E[X_T] = -\bigg[-\frac{n\alpha}{2}[(n+1)f(\alpha)+f(-\beta)(T-n-1)]+\kappa[n\alpha^2+(T-n)\beta^2]\bigg].
\end{equation*}
\end{proof}
We then prove lemma \ref{lem:second lemma}
\begin{proof}
    From Lemma \ref{lem: expected cash}, for any $n\in[T-1]$ the expected cash flow is written as
    \begin{align}\label{eq: E[X_T] with alpha}
        \E[X_T] & = \frac{n\alpha}{2}\Big[(n+1)\theta\alpha^{\delta}-(T-n-1)\Big(\frac{n\alpha}{T-n}\Big)^{\delta}\Big]-\kappa\Big[n\alpha^2+\frac{n^2\alpha^2}{T-n}\Big]\nonumber\\
        & = \alpha^{\delta+1}\underbrace{\frac{n\theta}{2}\Big[(n+1)-(T-n-1)\Big(\frac{n}{T-n}\Big)^{\delta}\Big]}_{A(n)}-\alpha^2\underbrace{\kappa\Big[n+\frac{n^2}{T-n}\Big]}_{B(n)}
    \end{align}
    We observe that $\E[X_T]>0$ if and only if 
    \begin{equation*}
        A(n)>0\Leftrightarrow n+1>(T-n-1)\Big(\frac{n}{T-n}\Big)^{\delta}
    \end{equation*}
     and in that case $\E[X_T]$ is a concave function of $\alpha$, thus to find the optimal $\alpha$ we compute the derivative with respect to $\alpha$ in Eq.(\ref{eq: E[X_T] with alpha}):
     \begin{align*}
         \frac{\partial\E[X_T]}{\partial\alpha} & = (\delta+1)\alpha^{\delta}A(n)-2\alpha B(n)\Leftrightarrow\\
         \alpha^* & = \bigg(\frac{(\delta+1)A(n)}{2B(n)}\bigg)^{\frac{1}{1-\delta}}.
     \end{align*}
     Hence the cash flow of the optimal strategy is then
     \begin{equation}\label{eq: PnD}
        \max_{n\in[T-1]}\E[X_T](\alpha^*,n).
    \end{equation}
\end{proof}


\section{Properties of the local maxima of the optimization problem} \label{subsec: Similarities among Trading Strategies}
We analyze the trading strategies obtained by SLSQP from $N$ different initializations, focusing on how similar the resulting strategies are in terms of their final cash flow. To compare two strategies $\vec{v}_i,\vec{v}_j\in\mathbb{R}^T$, we use the cosine similarity\footnote{The cosine similarity between two round-trip strategies corresponds to their Pearson correlation (see also \cite{Zhelezniak_etal}).} \citep{Intro2DataMining}
\[
c(i,j)=\frac{\langle \vec{v}_i,\vec{v}_j\rangle}{\|\vec{v}_i\|\|\vec{v}_j\|},
\]
which measures the angle between the two vectors and lies in $[-1,1]$. Since a strategy and its sign-reversed counterpart produce the same final cash position, we use the sign-invariant distance
\[
d(i,j)=1-|c(i,j)|.
\]
This distance treats aligned and anti-aligned strategies as equivalent. We then examine the pairwise distance matrices for all SLSQP terminal strategies. As expected (see Fig.~(\ref{fig:distances})), nearby strategies in the ordering tend to have smaller distances. However, the plots alone do not reveal whether similar strategies necessarily yield similar expected cash. Therefore, for all pairs satisfying $d(i,j)\leq \eta$ with $\eta=0.1$, we compute $|\mathbb{E}_i[X_T]-\mathbb{E}_j[X_T]|$ (see Fig.~(\ref{fig:discrepancy_theta0.1})). The resulting scatter plot shows that strategies can be close under the sign-invariant distance while still producing different expected cash, indicating structural differences despite high similarity. Finally in Fig.~(\ref{fig:213 vs 452}) are illustrated two strategies with small sign-invariant distance but relatively large cash discrepancy. 
 
\begin{figure}[h!]
    \centering
  
    \includegraphics[width=0.48\linewidth]{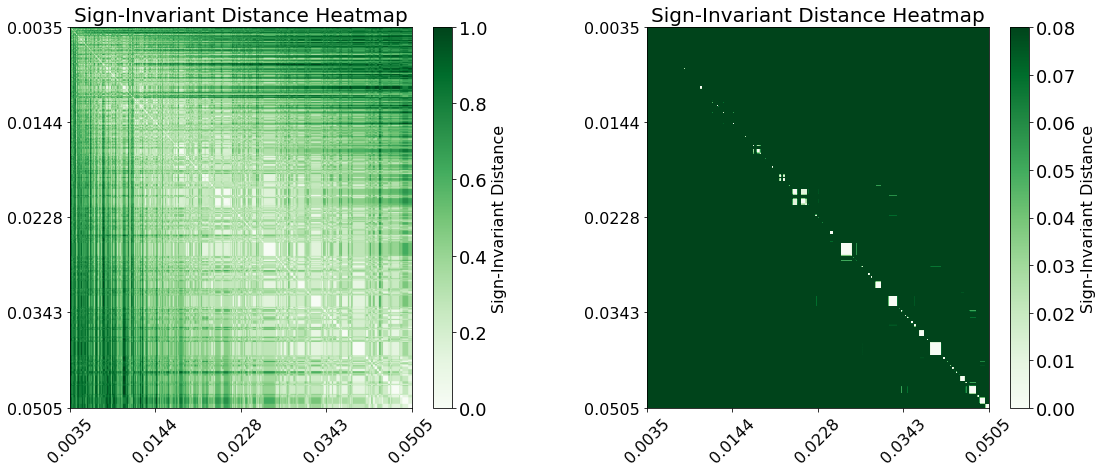}
    \caption{The heatmap of the sign-invariant distance on the left for all the strategies and on the right with precision of two colors for all the strategies in Fig.~(\ref{fig:numSol}).}
    \label{fig:distances}
\end{figure}

\begin{figure}
    \centering
    \includegraphics[width=0.5\linewidth]{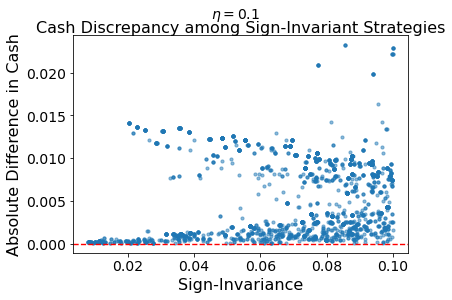}
    \caption{Each point represents a pair of strategies whose sign-invariant distance is less than or equal to the threshold $\eta$.}
    \label{fig:discrepancy_theta0.1}
\end{figure}

\begin{figure}[h!]
    \centering
    \subfigure[]{
    \includegraphics[width=0.45\linewidth]{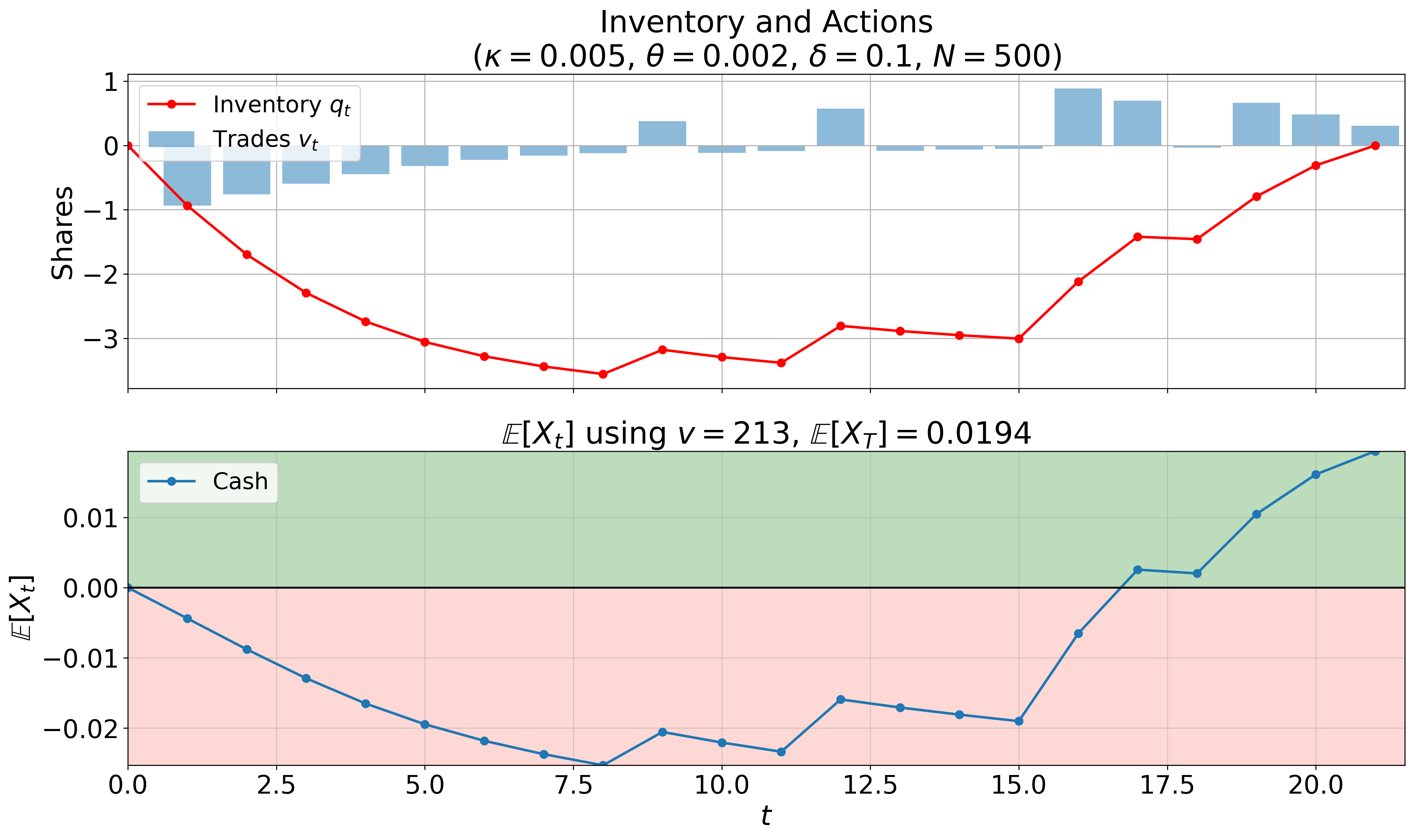}}
    \subfigure[]{
    \includegraphics[width=0.45\linewidth]{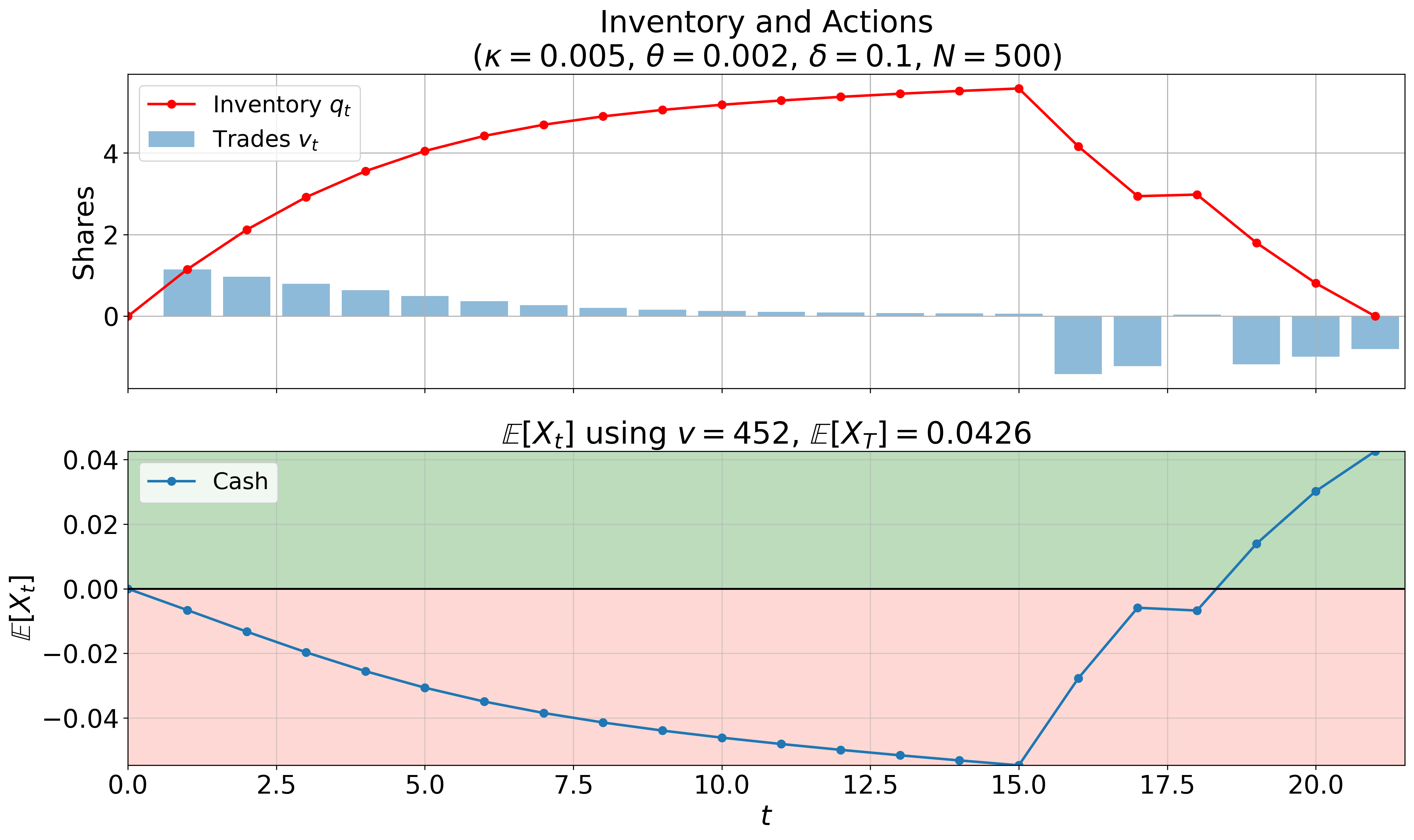}}
    \caption{The two strategies with the largest discrepancy in expected cash for which $d(i,j) = 0.0855$.}
    \label{fig:213 vs 452}
\end{figure}
\section{Estimating the Parameters in SLSQP}\label{app: Estimating the Parameters in SLSQP}
In this Appendix we justify the derivation for the estimation of the permanent impact parameter $\theta$ as a function of the exponent impact parameter $\delta$. Since
\begin{equation}
    \hat{\delta} = \argmin_{\delta\in(0,1)}\underbrace{\sum_{i=1}^N\Big(I_i-\hat{\theta}(\delta)\frac{T}{S_0}\Big(\frac{q_i}{T}\Big)^{\delta}\Big)^2}_{\text{RSS}(\hat{\theta})}
\end{equation}
for the simplification in the notation we use: $\hat{\theta}(\delta):=\hat{\theta}$. The RSS is a concave function in $\hat{\theta}$ thus there exist unique minimizer, it is then enough to consider the first derivative:
\begin{align*}
    \frac{\partial \text{RSS}}{\partial\hat{\theta}} & = 0 \Leftrightarrow\\
    -2\frac{T}{S_0}\sum_{i=1}^N\Big(\frac{q_i}{T}\Big)^{\delta}\Big(I_i-\hat{\theta}\frac{T}{S_0}\Big(\frac{q_i}{T}\Big)^{\delta}\Big) & = 0\Leftrightarrow\\
    \hat{\theta}(\delta) & = \frac{S_0}{T}\frac{\sum_{i=1}^N(\frac{q_i}{T})^{\delta}I_i}{\sum_{i=1}^N(\frac{q_i}{T})^{2\delta}}.
\end{align*}
\begin{figure}
    \centering
    \subfigure[]{\includegraphics[width=1\textwidth]{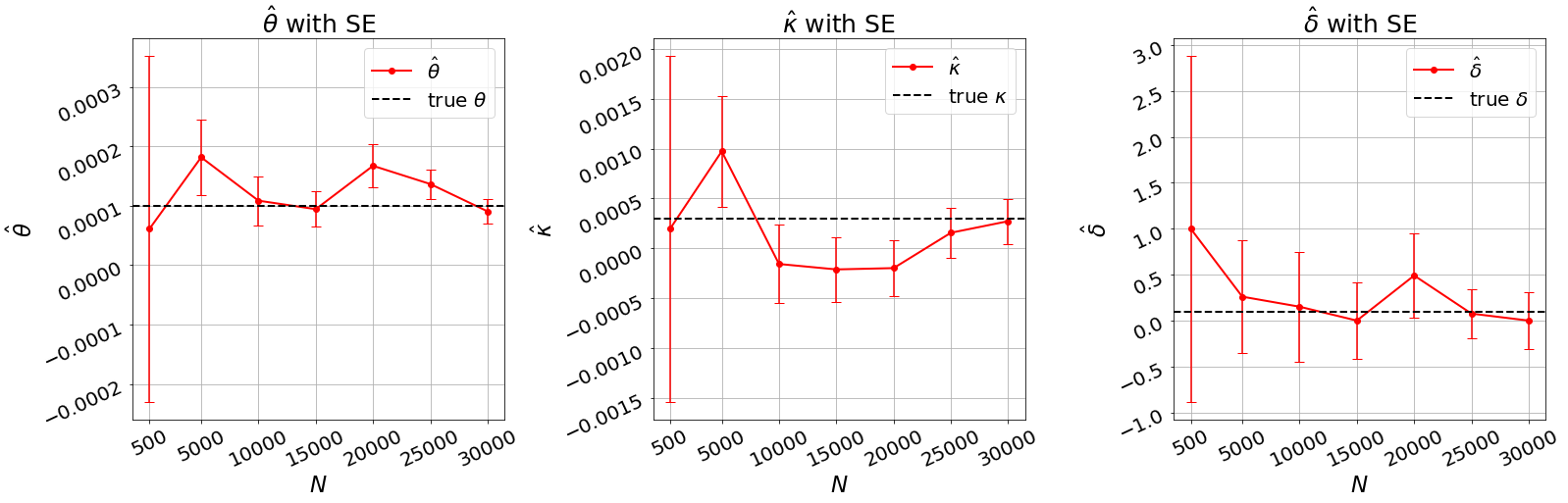}}
    \subfigure[]{\includegraphics[width=1\textwidth]{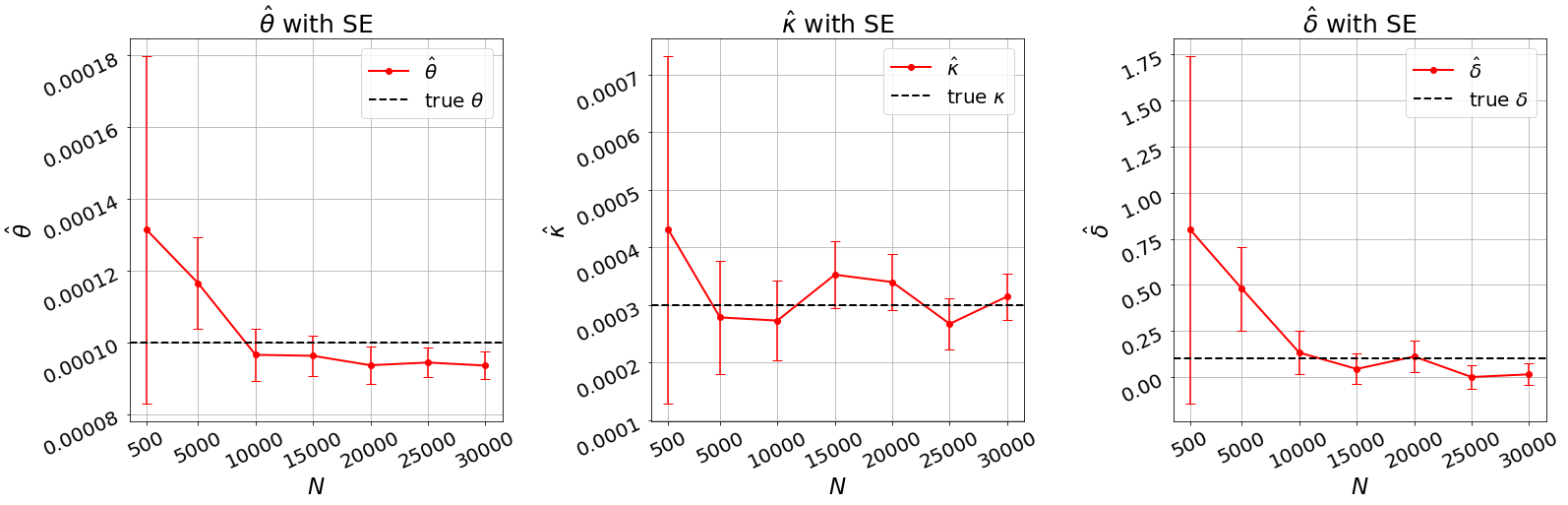}}
    \subfigure[]{\includegraphics[width=1\textwidth]{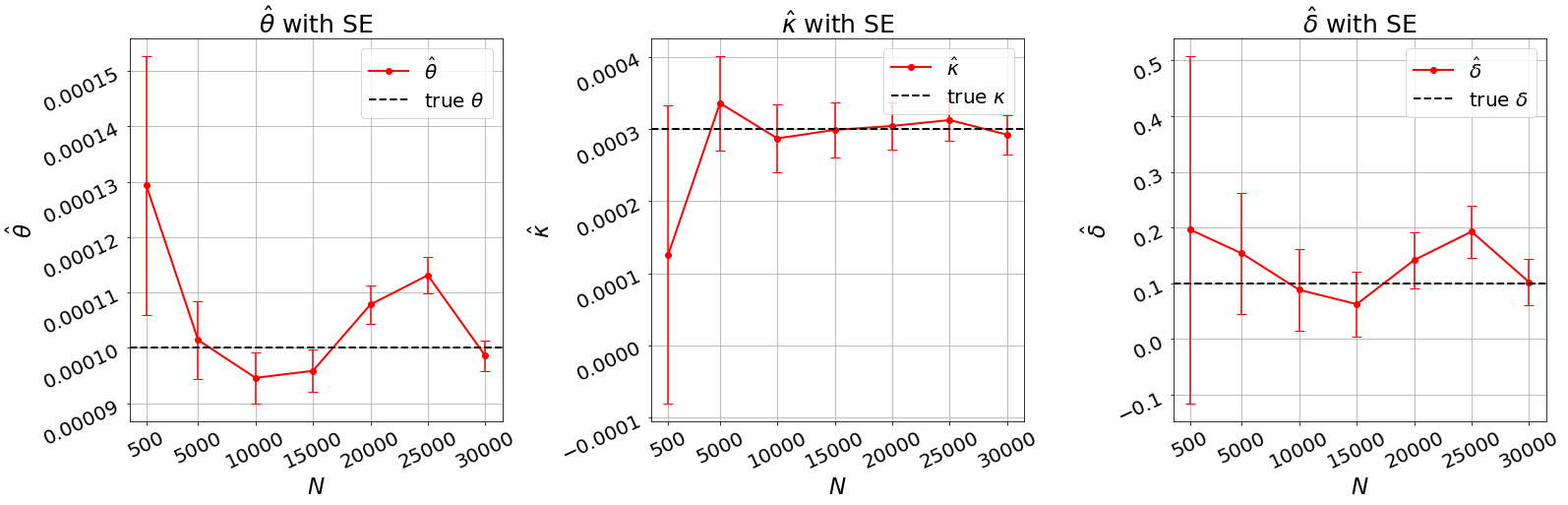}}
    \caption{Estimation of the impact parameters along with their standard errors as a function of the number of metaorders $N$ (a) when $\sigma=0.0168$, (b) when $\sigma=0.003$ and (c) when $\sigma=0.002$.}
    \label{fig:consistent impact params}
\end{figure}
Fig.~(\ref{fig:consistent impact params}) depicts the estimated impact parameters as a function of the number of metaorders $N$ for all three values of $\sigma$. We observe that the estimators are consistent since the standard errors get decrease as $N$ increases and approach the true value of the parameters. Fig.~(\ref{fig:regSLSQP}) illustrates the strategy obtained with the $\text{regSLSQP}_N$ for $N=500$ and $N=5000$ metaorders when the volatility is $0.0168$ in panels (a) and (d), when the volatility is $0.003$ in panels (b) and (e) and when the volatility is $0.002$ in panels (c) and (f). Clearly the expected cash-flow is positive since we use the complete information of the price dynamics for its computation.
\begin{figure}
    \centering
    \subfigure[]{\includegraphics[width=0.32\linewidth]{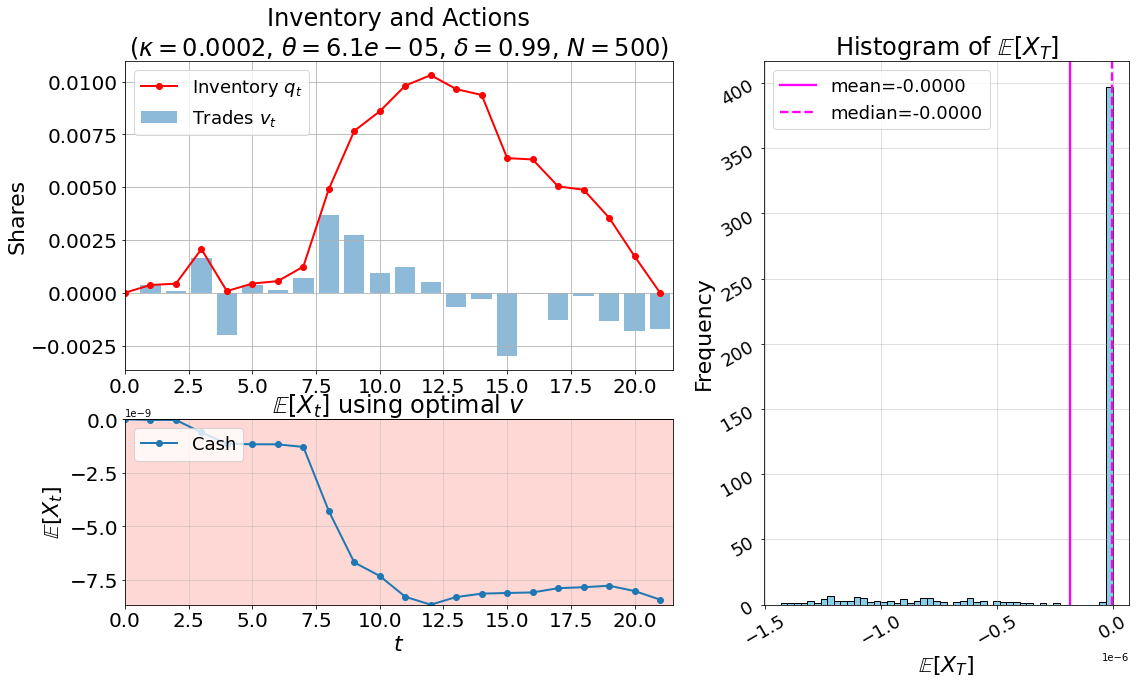}}
    \subfigure[]{\includegraphics[width=0.32\linewidth]{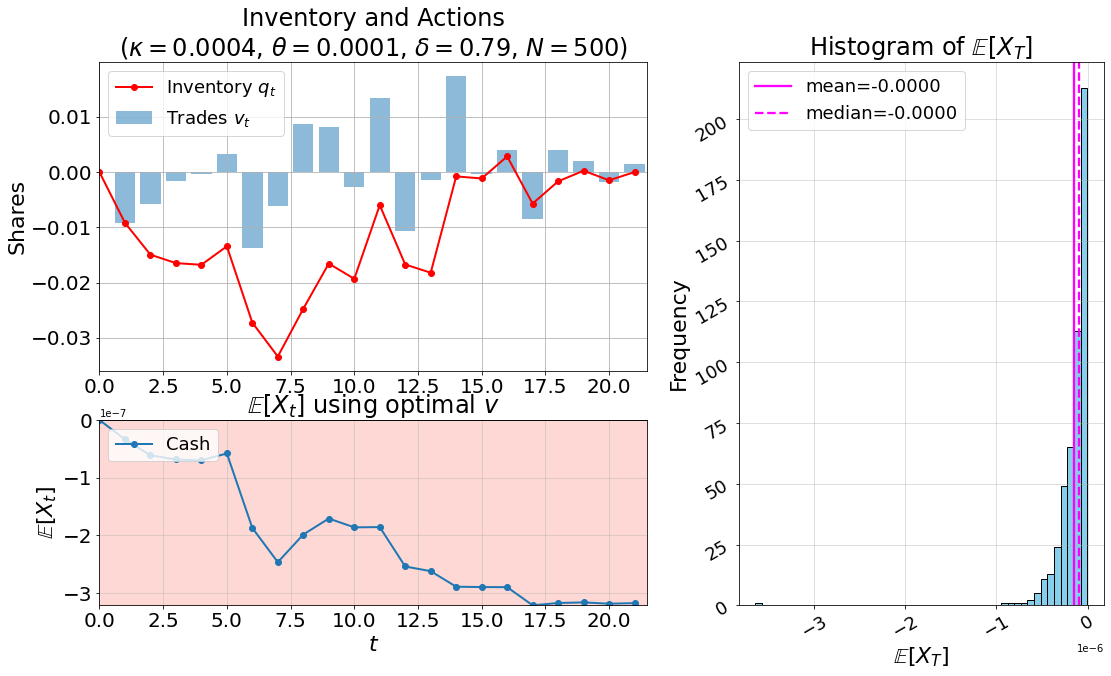}}
    \subfigure[]{\includegraphics[width=0.32\linewidth]{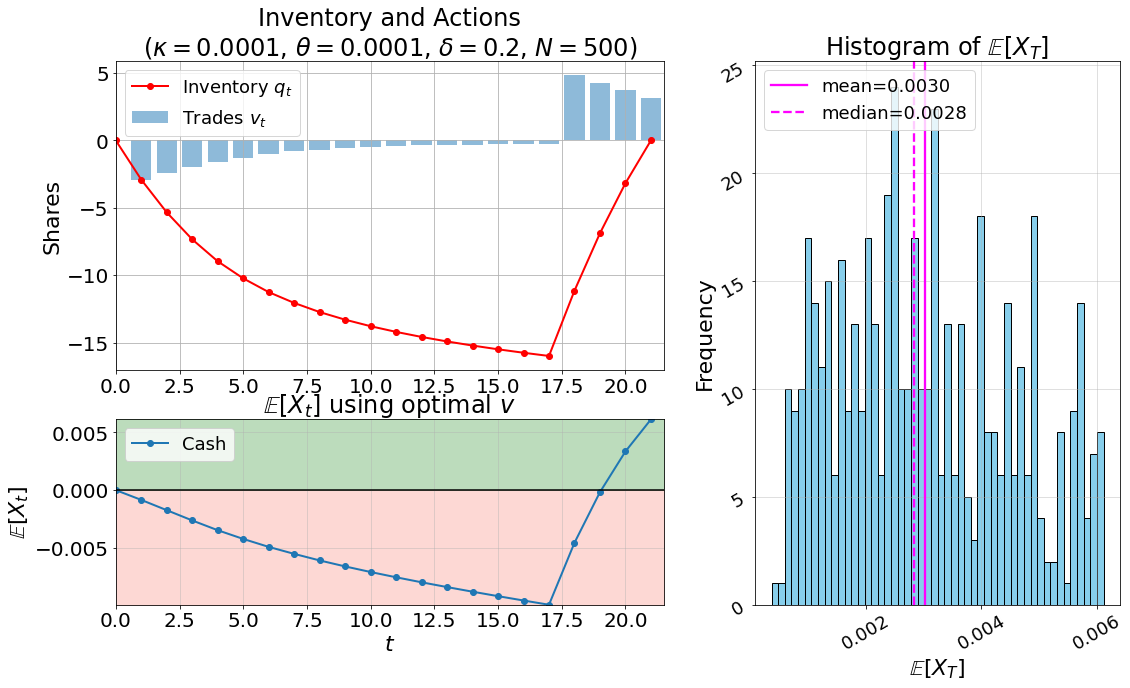}}
    \subfigure[]{\includegraphics[width=0.32\linewidth]{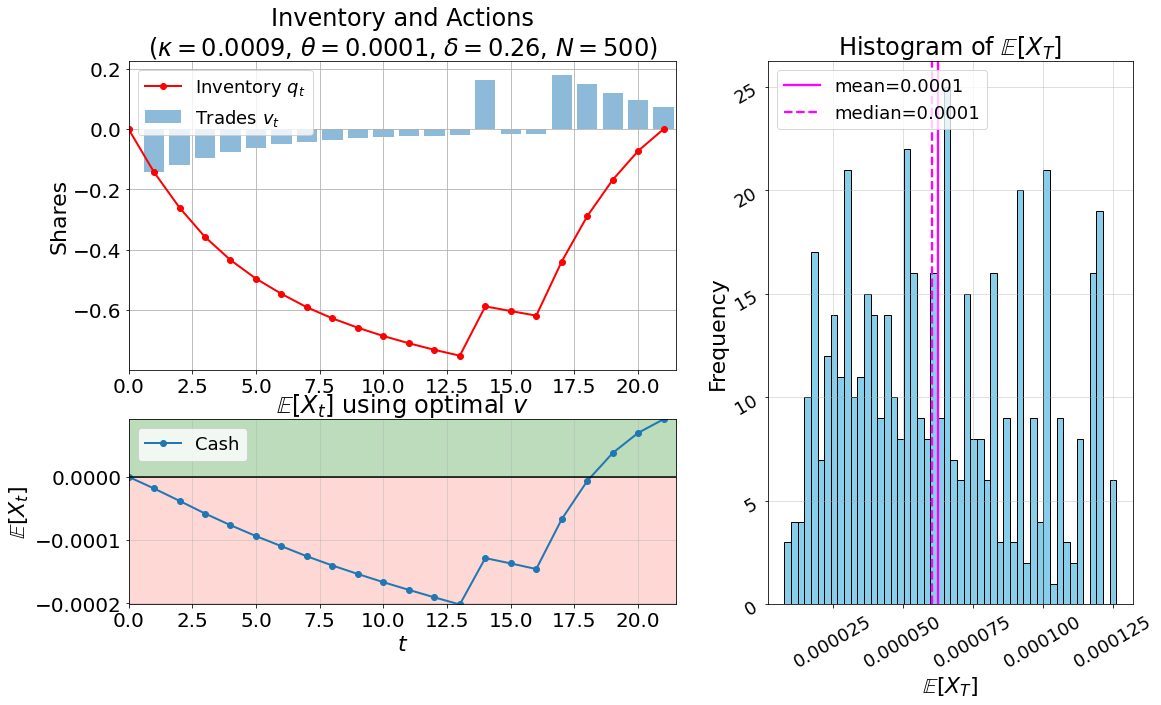}}
    \subfigure[]{\includegraphics[width=0.32\linewidth]{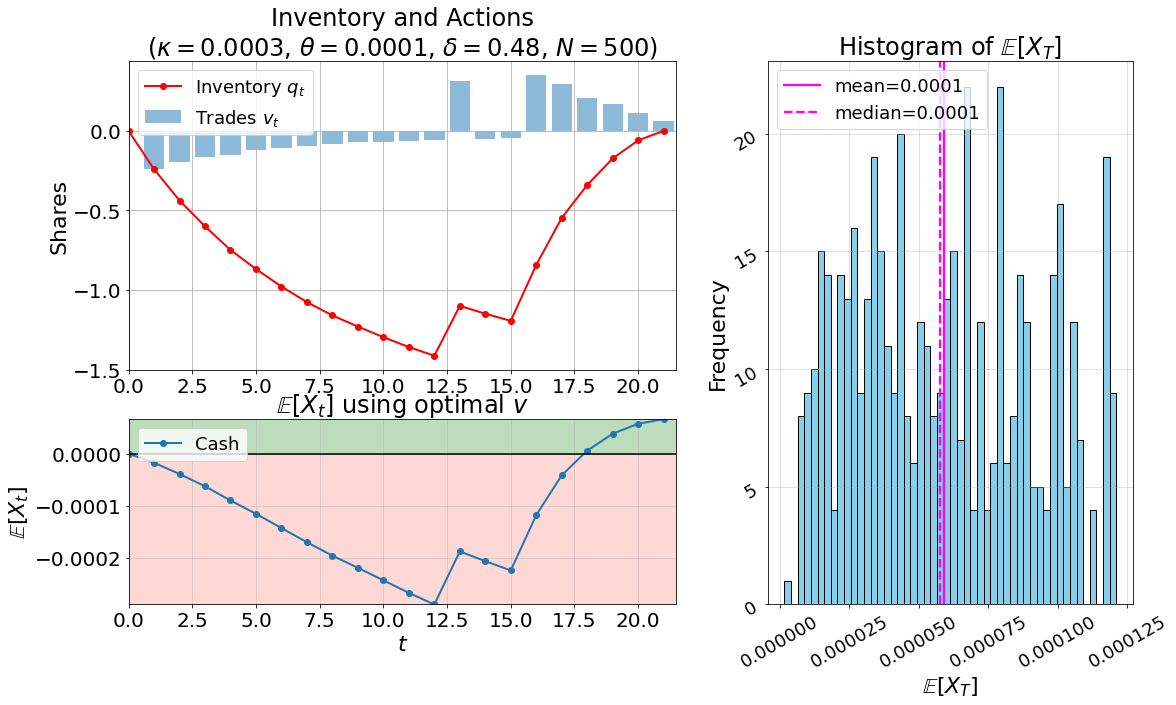}}
    \subfigure[]{\includegraphics[width=0.32\linewidth]{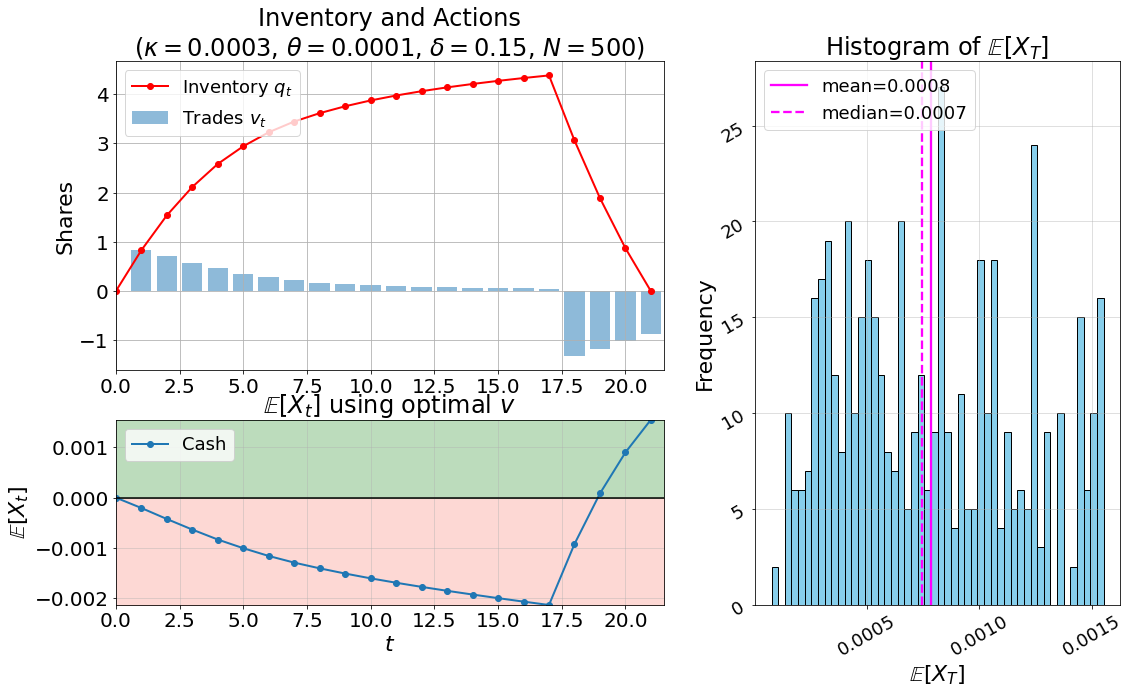}}
    \caption{Plots {\bf (a-c)} when $N=500$ and {\bf (d-f)} when $N=5000$ metaorders. In each panel, {\bf Top-Left:} The optimal trading strategy by $\text{regSLSQP}_N$ and the corresponding inventory (red line). {\bf Bottom-Left:} The expected cash-flow. The green (red) area indicates the region for which $\E[X_t]>0$ ($\E[X_t]<0$). {\bf Right:} The histogram of $\E[X_T]$ from the different starting points.}
    \label{fig:regSLSQP}
\end{figure}
\section{The training phase of DDPG}\label{app: training DDPG}
Here we plot the actions' aggressiveness (see Eq.(\ref{eq: aggressiveness})) on the left plot of Fig.~(\ref{fig:Aggr vs Dirct}) and the actions' directionality (see Eq.(\ref{eq: directionality})) on the right plot of Fig.~(\ref{fig:Aggr vs Dirct}) during the training phase of DDPG when the reward is 
\begin{equation}\label{eq: rev reward}
    r_t = -S_tv_t-\kappa v^2_t
\end{equation}
\begin{figure}
    \centering
    \includegraphics[width=1\linewidth]{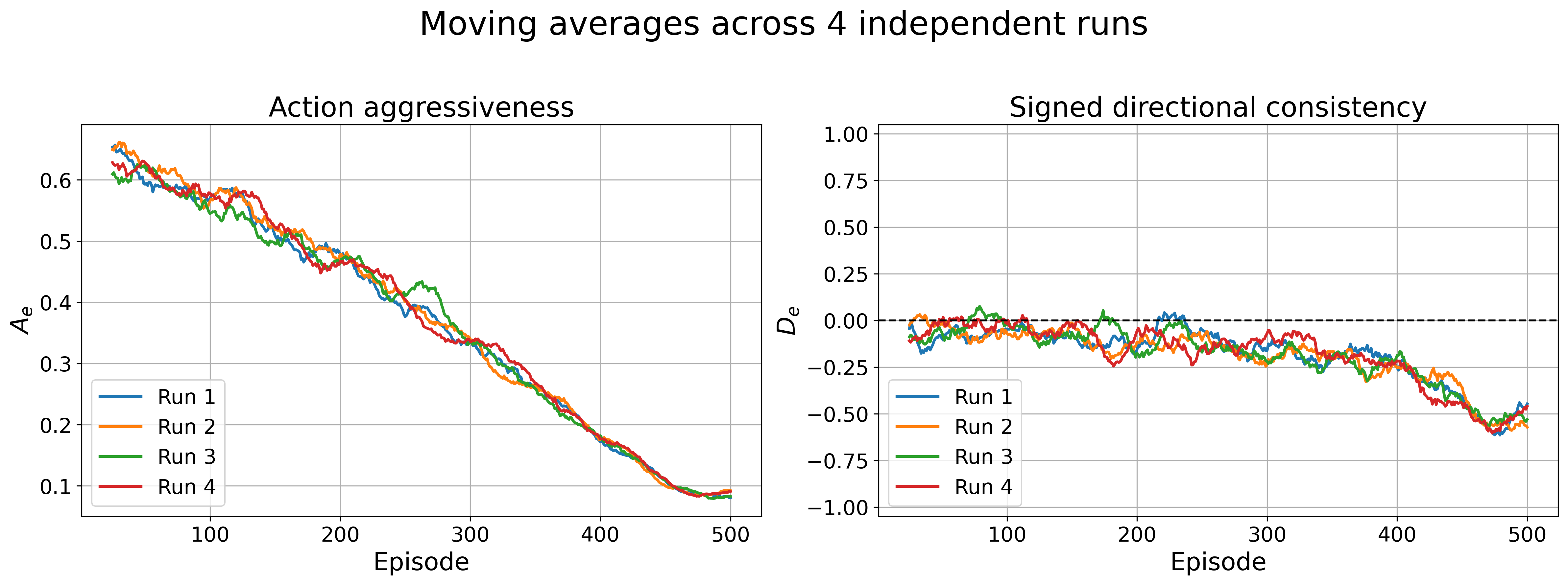}
    \caption{The 25-episode moving averages {\bf Left:} of the action aggressiveness and {\bf Right:} of signed action directionality during training and the reward is defined as in Eq. (\ref{eq: rev reward}).}
    \label{fig:Aggr vs Dirct}
\end{figure}
We observe that, although the level of aggressiveness remains approximately the same throughout the training episodes with that of Fig.~(\ref{fig:Aggresiveness and Directionality}), the directionality of the actions is reversed due to the negative sign in the reward function, relative to the directionality obtained when the reward is defined as in Eq.(\ref{eq:reward}).

\end{document}

%% file: preamble.tex
\usepackage[utf8]{inputenc}
\usepackage{blindtext}
\usepackage{titlesec}
\usepackage{graphicx}
\usepackage{amsmath}
\usepackage{amsthm}
\usepackage{amssymb}
\usepackage{tikz}
\usetikzlibrary{bayesnet}
\usetikzlibrary{arrows.meta}
\usetikzlibrary{backgrounds}
\usepackage{color}
\usepackage{xcolor}
\usepackage{subfigure}
\usepackage{afterpage}
\usepackage{algorithm} 
\usepackage{algpseudocode} 
\usepackage{multirow}
\usepackage{hyperref}
\usepackage{algorithm}
\usepackage{algpseudocode}
\usepackage{xcolor}
\usepackage{wrapfig}
\usepackage{epsfig}
\usepackage{mathtools}
\usepackage{todonotes}
\usepackage{colortbl}

\usepackage{natbib}
\setcitestyle{authoryear,open={(},close={)}}
\usepackage[normalem]{ulem}
\hypersetup{
    colorlinks=true,
    linkcolor= violet,
    citecolor = magenta,
    filecolor= brown,      
    urlcolor= blue,
    pdftitle={Overleaf Example},
    pdfpagemode=FullScreen,
    }
\usepackage[font=small,labelfont=bf]{caption} 
\usepackage{graphicx} 
\usepackage{booktabs}
\usepackage{geometry}
\usepackage{eurosym}
\usepackage{authblk}

\newtheorem{definition}{Definition}[section]
\newtheorem{proposition}{Proposition}[section]

\newtheorem{lemma}{Lemma}[section]

\newcommand{\R}{\mathbb{R}}

\newcommand{\N}{\mathbb{N}}
\newcommand{\E}{\mathbb{E}}

\newcommand{\argmax}{\text{argmax}}
\newcommand{\argmin}{\text{argmin}}

\newcommand{\state}{\mathcal{S}_t}